\def\BState{\State\hskip-\ALG@thistlm}
\DeclarePairedDelimiter\floor{\big\lfloor}{\big\rfloor}
\newcommand{\xfrac}[3][]{\nicefrac[#1]{#2}{#3}}
\newtheorem{thm}{Theorem}
\newtheorem{lem}[thm]{Lemma}
\newtheorem{prop}[thm]{Proposition}
\newtheorem{cor}[thm]{Corollary}
\newtheorem{defn}[thm]{Definition}
\newtheorem{rem}[thm]{Remark}
\begin{document}

\title{A Model for Information Networks:  \\
Efficiency, Stability and Dynamics}

\numberofauthors{2} 
\author{
\alignauthor
L. Elisa Celis \\
       \affaddr{\'Ecole Polytechnique F\'ed\'erale de Lausanne}\\
       \email{elisa.celis@epfl.ch}
\alignauthor
Aida S. Mousavifar \\
       \affaddr{\'Ecole Polytechnique F\'ed\'erale de Lausanne}\\
       \email{aida.mousavifar@epfl.ch}
}

\maketitle

\begin{abstract}
We introduce a simple network model that is inspired by social information networks such as twitter. Agents are nodes, connecting to another agent by building a directed edge has a cost, and reaching other agents via short directed paths has a benefit; in effect, an agent wants to reach others quickly, but without the cost of directly connecting each and every one. Even in its simplest form, edges in this framework are neither substitutes or complements in general; hence, standard techniques are required to study the model's properties and dynamics do not apply. 

We prove that an asynchronous edge dynamics always converge to a stable network; in fact, for this convergence is fast for a range of parameters. Moreover, the set of stable networks are nontrivial and can support the type of network structures that have been observed to appear in social information networks -- from community clusters to broadcast networks, depending on the parameters many natural formations can emerge. We further study the static game, and give classes of stable and efficient networks for nontrivial parameter ranges. We close several problems, and leave many interesting ones open.
\end{abstract}

\vspace{-.1in}
\section{Introduction}

Online social networks such as Facebook and Twitter are now an ubiquitous part of modern life. 
Moreover, given the prevalence of economic situations in which the network of relationships between agents play an important role in outcomes, it is essential to rigorously understand how networks form and what network structures are likely to emerge.  
Large interdisciplinary subfields that combine economics, sociology, mathematics and computer science in the study of social networks are emerging (see \cite{csw2005} for a survey). 
While many models for social network exist, most are either stochastic (i.e., probabilistic models) or are learned models (i.e., constructed by fitting a set of parameters). 
The game theoretic approaches to network formation that exist are  largely motivated by games where network infrastructure is being built and costs are shared amongst agents (see, e.g., {\cite{AGTbook}} Chapter 19), and do not necessarily capture natural properties of online social networks. 
We introduce a simple directed network model that has a natural interpretation with respect to many online social networks. 
Agents are nodes in the network and the model is defined by three key parameters:
\begin{enumerate}[topsep=0pt,itemsep=0ex,partopsep=0ex,parsep=0pt]
\item the \emph{cost} $c_s$ of directly connecting to another agent (i.e., making a friend request), 
\item the \emph{cost} $c_\ell$ of accepting a connection another agent (i.e., confirming a friend request),  and 
\item the \emph{distance} $k$ (i.e., maximum path length) that suffices for gaining utility from an indirect connection to another agent.
\end{enumerate}
Agents trade off decisions between the cost of maintaining edges against the rewards (in terms of connectivity) from doing so.
Allowing $c_s, c_\ell > 0$ captures many online social networks such as Facebook and LinkedIn in which one agent initiates a connection request and the other choses to accept or decline. 
When $c_\ell=0$, the model captures other online social networks such as Twitter in which a connection can be made unilaterally. 
The distance $k$ captures the maximum path distance that suffices for deriving utility from (indirect) connections;  a generalization of this model can further allow \emph{target sets} T(v) which defines the set of agents that $v$ would like to reach within distance $k$.

We study natural dynamics in which agents periodically make asynchronous decisions on whether to add or sever edges 
(the model and dynamics are formally introduced in Section~\ref{sec:prelim}).
Because edges in this model are neither complements nor substitutes (see Figure~\ref{fig:comp-sub}), standard techniques for analyzing the model do not apply.
However, the fact that both forces exist allows for interesting and nontrivial networks to appear as fixed points of the dynamics.
In particular, as discussed in Section~\ref{sec:network_properties}, stable networks can have both open and closed triangles (i.e., a wide range of clustering coefficients), can exhibit homophily or heterophily, and can support various classes of real-world networks (e.g., exhibiting community structures and/or features of a broadcast network).\footnote{This is in stark contrast to related settings in which the only fixedpoints are cycles and empty graphs \cite{BaGo2000}.}
In Section~ \ref{convergence}, we prove that the dynamic process converges; in fact when $c_s, c_\ell > 0$ the convergence is \emph{fast}. 
Lastly, we prove that for a nontrivial range of parameters a \emph{flower graph} is efficient and stable, and a \emph{Kautz graph} is symmetric-efficient (see Section~\ref{sec:stab}).

On the theoretical front, we leave open the technically challenging question of whether symmetric networks that are also stable exist for nontrivial parameter ranges. 
On the practical front, we leave open the question of evaluating the fit of the model against network data sets; understanding the ranges of parameters observed in real networks may lead to interesting insights about the participants and the forces that drive them.

For the sake of readability, we give an overview of the key results in the main body of the paper; the majority of the proofs are presented in the appendix.

\section{Our Model \& Key Definitions}
\label{sec:prelim}

\subsection{Preliminaries and Notation}

\smallskip
Let $V$ be a set of agents with $n = |V|$. 
\begin{defn}[Bidirected Network]
A bidirected network, denoted by  $G = (V, E_s, E_\ell)$, is a network with vertex set $V$ with two \emph{types} of directed edges; {speaking} edges $s_{uv} \in E_s$,\footnote{An edge $s_{uv}$ can be thought of as $u$ \emph{initiating} contact with $v$.} and {listening} edges $\ell_{vu} \in E_\ell$.\footnote{An edge $\ell_{vu}$ can be thought of as $v$ \emph{accepting} contact from $u$.}
\end{defn}
\vspace{-.1in}
Note that $s_{uv}$ exists independently of $\ell_{uv}$; both, one, or none may be present in the network. When clear from context, with some abuse of notation, we drop the $s/\ell$ demarkations and simply refer to edges $uv \in E$.
We let $ds^+_G(v) = |\{w | s_{vw} \in E_s \}|$ denote the number of outgoing speaking edges of $v$, and let $ds^-_G(v) =|\{w | s_{wv} \in E_s \}|$ denote the number of incoming speaking edges to $v$. The analogous  definition is used for listening out-degree ($d\ell^+_G(v)$) and in-degree ($d\ell^-_G(v)$). 

\begin{defn}[Speaking and Listening Reachability]
We say there is a \emph{speaking path of length $k$} from $v$ to $u$ if there exists a set of speaking edges $s_{vv_1}, s_{v_1v_2}, \ldots, s_{v_{k-1}u} \in E_s$ and a set of listening edges $\ell_{uv_{k-1}}, \ell_{v_{k-1}v_{k-2}}, \ldots, \ell_{v_1v}\in E_{\ell}$. We say that a vertex $u$ that has a speaking path of length at most $k$ from $v$ is \emph{$k$-speaking-reachable from $v$}, and let $R^s_{G,k}(v) \subseteq V$ be the set of all such vertices. Listening paths, listening reachability, and the set of listening-reachable vertices  $R^\ell_{G,k}(v) \subseteq V$ are defined in an analogous manner.
\end{defn}
\vspace{-.1in}
With some abuse of notation, when $k$ is clear from context we drop it from the notation above.  Note that if $u$ is speaking-reachable from $v$ then $v$ is listening-reachable from $u$.

\subsection{The Model}
\vspace{.1in}

\begin{figure}[t]
\begin{center}
\vspace{-.15in}
\includegraphics[width = 2.5in]{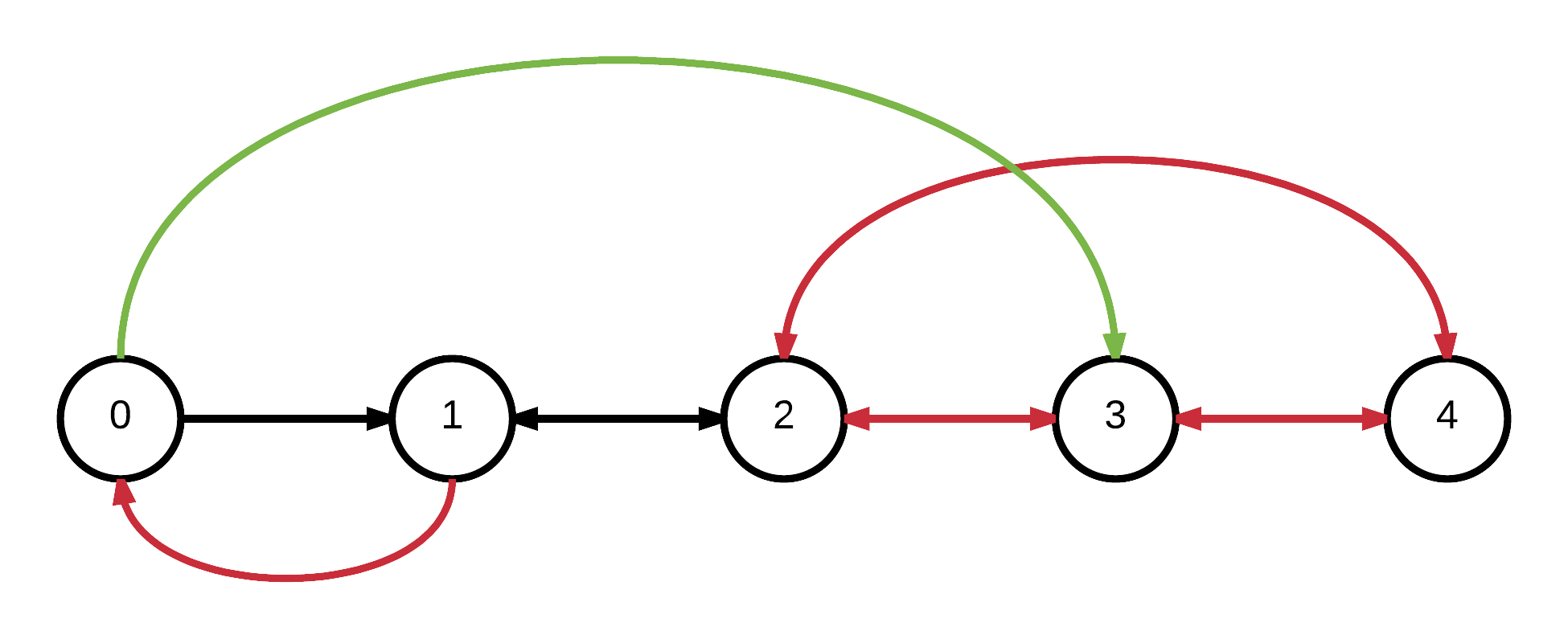}
\vspace{-.15in}
\caption{Example network depicting addable edges (green), removable edges (red) and existing edges that are not removable (black) for $k=2$, $c_s=1.5$ and $c_\ell = 0$.}
\end{center}
\label{fig:example}
\end{figure}

Each agent $v \in V$ has a strategy $S_v = (S^s_v, S^\ell_v)$, which consists of subsets of agents $S^s_v, S^\ell_v \subseteq V$. Thinking of agents as vertices, $S^s_v$ (respectively $S^\ell_v)$ corresponds to the set of vertices that $v$ connects to by building speaking (respectively listening) edges $s_{vu}$ (respectively $\ell_{vu}$). Thus, the strategy vector $\mathbf{S} = (S_1,\ldots, S_n)$ defines a bidirected network $G = (V,E_s, E_\ell)$ where $E_s = \{vu | u \in S^s_v\}$ and $E_\ell = \{vu | u \in S^\ell_v\}$. With some abuse of notation, we often refer to $G$ as the set of strategies and will use $G$ and $\mathbf{S}$ interchangeably.

The utility of $v$ is given by
$ U_G(v) = U^s_{G,k}(v) + U^\ell_{G,k}(v) $
where 
\[ U^s_{G,k}(v)  = \vert R^s_{G,k}(v) \vert - c_s \cdot  ds_G^+(v), \]
\[ U^\ell_{G,k}(v)  = \vert R^\ell_{G,k}(v) \vert - c_\ell \cdot  d\ell_G^+(v)  \]
are the utilities derived from speaking and listening respectively. The \emph{costs} $c_s$ and $c_\ell$ capture the cost of maintaining speaking and listening edges respectively. 

A natural special case is that in which one of the costs is 0 (without loss of generality $c_\ell = 0$).  For such a model, an agent can always set $S^\ell_v = V$ without loss to her utility. Hence, the strategy space boils down to $S^s_v$. Moreover, we can consider only the speaking portion of the utility $U_{G,k}(v) = U^s_{G,k}(v)$ without loss of generality. In such cases we drop all $s/\ell$ demarkations and simply think of a directed network $G = (V,E) = (V, E_s)$. This special case, when we further assume that $k = \infty$, is equivalent to the network model in \cite{BaGo2000}.

\subsection{Dynamics}

In this paper we only consider dynamics that are asynchronous (i.e., one agent updates at a time) and stochastic (i.e., agents update in a random order). 
A shorthand notation for the network obtained by adding (alternatively, deleting) the edge $vw$ from an existing network $G$ is $G + vw$ (alternatively, $G-vw)$. Similarly, we let $G + S_v$ be the network obtained by adding all $vu$ edges where $u \in S_v$ to $G$. 
The following definition is convenient for a variety of our definitions and results.
\begin{defn}[Addable and Removable Edges]
We say an edge ${uv} \in G$ is {removable} if $U_{G-{uv}}(u) > U_G(u)$. Similarly, we say an edge $uv \not\in G$ is addable if $U_{G+{uv}}(u) > U_G(u)$.
\end{defn}
\vspace{-.1in}
See Figure~\ref{fig:example} for an example of a network where the addable and removable edges are depicted.

We can then define the \emph{edge dynamics} as follows: 
\begin{defn}[Edge Dynamics]
\label{df:edg_dynamics}
In each round, one potential (speaking or listening) edge $vw$ is selected at random. Without loss of generality, assume it is a speaking edge. If $s_{vw} \in E_s$ then the edge is deleted if and only if it is removable. Alternatively, if $s_{vw} \notin E_s$ then the edge is added if and only if it is addable. The analogous  definition is used for listening edges. 
\end{defn}
\vspace{-.1in}

\begin{figure}
\centering
\vspace{-.2in}
\hspace{-.1in}
\subfigure[][Complements: If  $ij$ is added, then  $ki$ becomes addable.]
{
\label{fig:VirtualBidSpace}
\makebox[1.55in][c]{\includegraphics[height=1in]{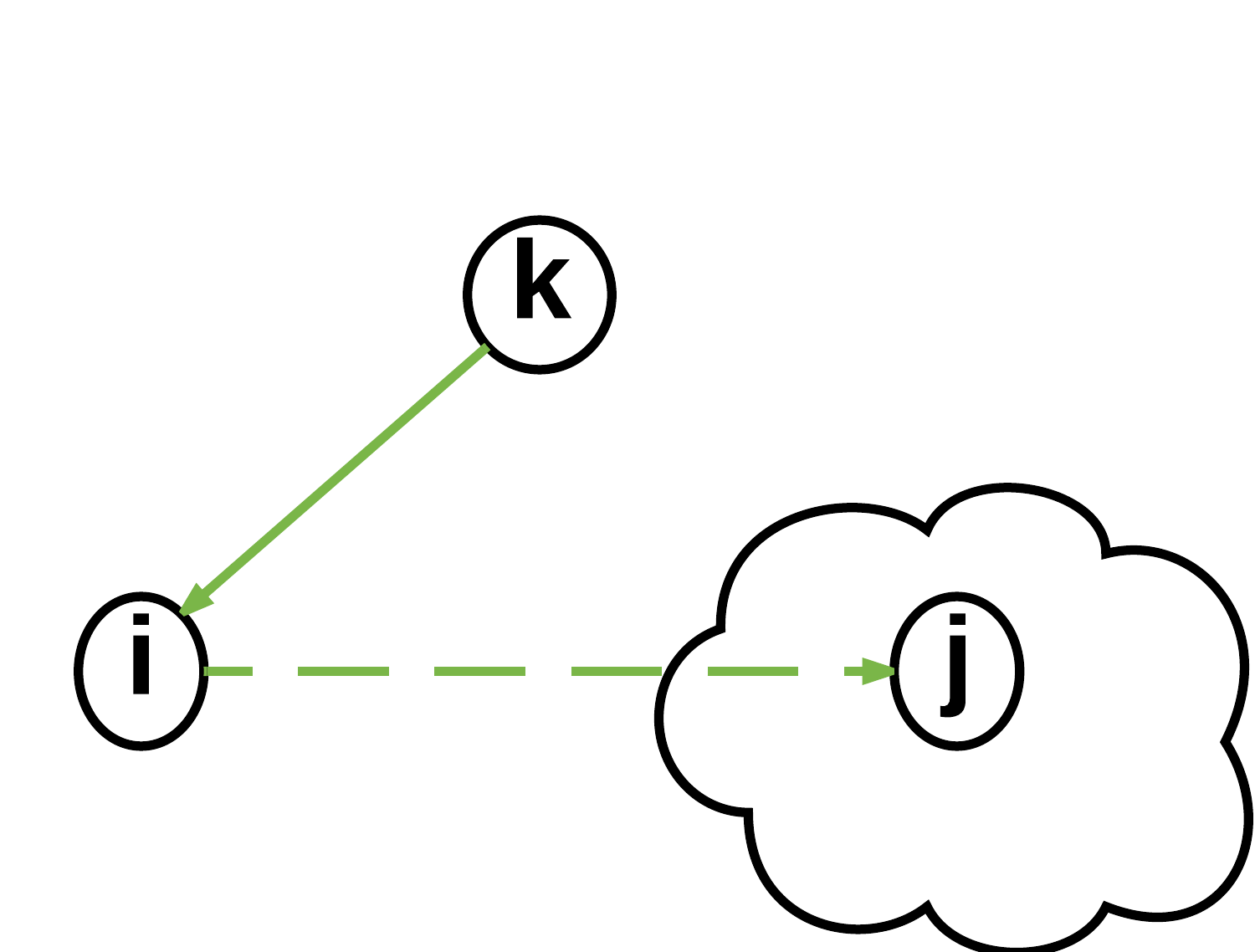}}
}
\hspace{.1in}
\subfigure[]
[Substitutes: If  $ij$ is added, then  $ki$ becomes removable.]
{ 
\label{fig:BidSpace}
\makebox[1.55in][c]{\includegraphics[height=.9in]{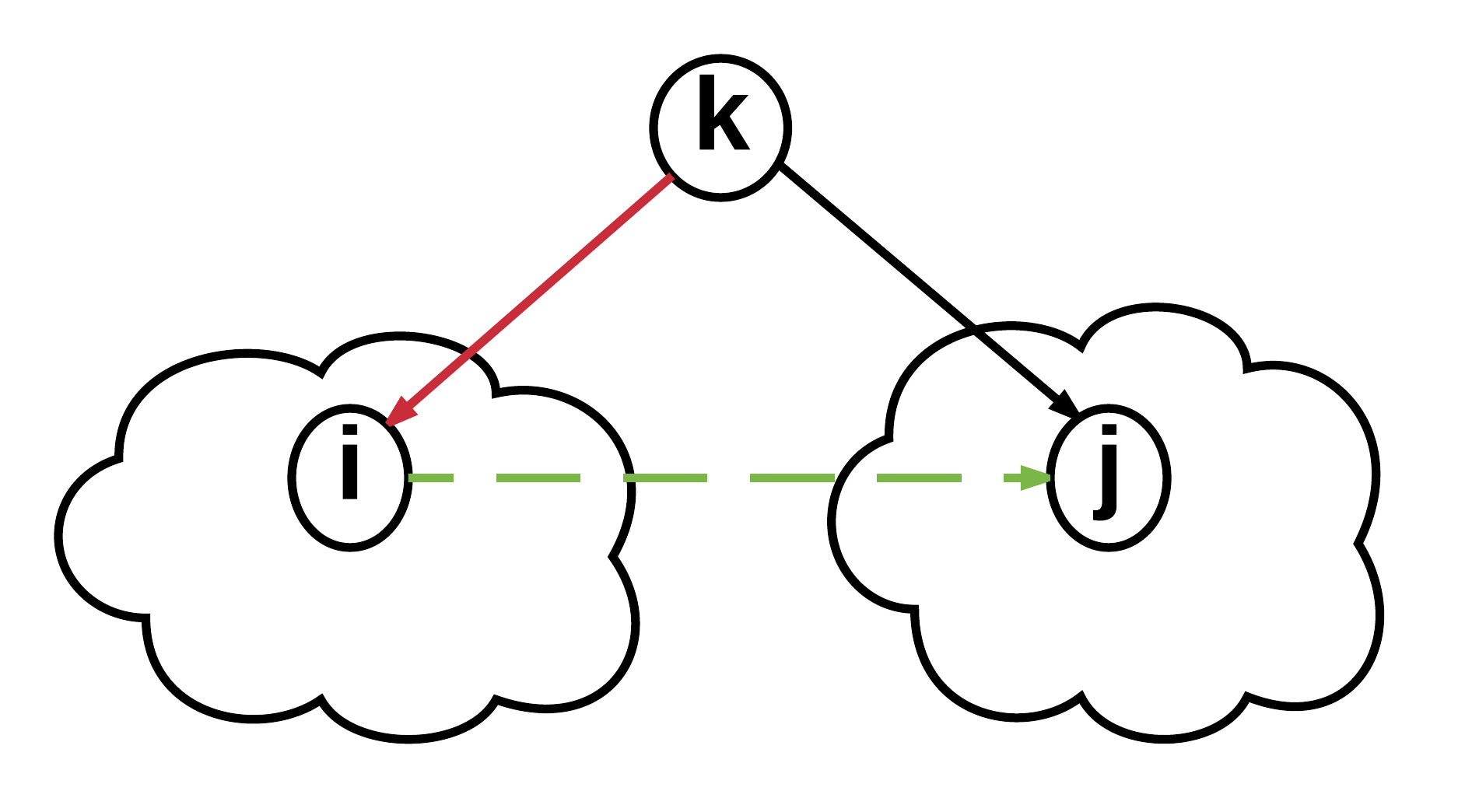}}
} 
\hspace{-.1in}
\vspace{-.1in}
\caption{Edges can be strategic substitutes or complements depending on the structure in the rest of the network. In the examples above, $k = \infty, c_\ell = 0$, and $j$ in (a) \& (b) and $i$ in (a) belong to strongly connected components of size $\geq c_s$.}
\label{fig:comp-sub}
\end{figure}

\subsubsection{Complementarity \& Substituability}

We remark that one of the difficulties in analyzing this model arises from the fact that edges can be either strategic complements or substitutes depending on the structure of the remainder of the network (see, e.g., Figure~\ref{fig:comp-sub}). Hence, standard approaches do not apply for the analysis of this model and its dynamics. 

\subsection{Stability and Efficiency}

\begin{figure}
\centering
\vspace{-.4in}
\includegraphics[width=1.5in]{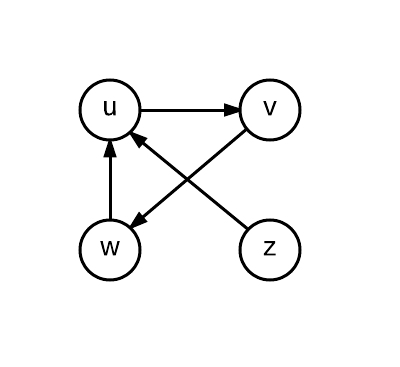}
\vspace{-.4in}
\caption{Stable networks may have open \emph{and} closed triangles depending on the structure in the rest of the network. In the above example,  let $k \geq 2, c_\ell = 0$ and $1 < c_s < 2$.}
\label{fig:clustering}
\end{figure}

Denote by $\mathbf{S}_{-v}$ by the $(n-1)$-dimensional vector of the strategies played by all agents other than $v$.  With some abuse of notation we  use $({S_v,\mathbf S_{-v}})$ and $\mathbf S$ and $G$ interchangeably as is   convenient. 
\begin{defn}[Stability]
A strategy vector $\mathbf{S}$ is said to be {stable} if for all agents $v$ and each potential strategy $S'_v \subseteq V$ , we have that 
\[ U_{S_v,\mathbf S_{-v}}(v) \geq U_{S'_v,\mathbf S_{-v}}(v) .\]
This is equivalent to saying that $\mathbf S$ is a Nash equilibrium. 
\end{defn}
\vspace{-.1in}
In other words, no agent $v$ has any \emph{incentive} to change her strategy from $S_v$ to $S'_v$, assuming that all other agents stick to their current strategies. Observe that such a solution is self-enforcing in the sense that once the agents are playing such a solution, no one has any incentive to deviate.
{In fact, for our model, something stronger holds: 
\begin{prop}
A  strategy vector is stable if and only if no edge is addable or removable.
\end{prop}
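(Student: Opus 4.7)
The forward direction is immediate: if $\mathbf{S}$ is stable, then no single-edge unilateral deviation strictly improves any agent's utility --- which is precisely the statement that no edge is addable or removable.

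For the converse, I would decouple the problem for each agent $v$. Since any shortest speaking path from $v$ never re-enters $v$, $R^s_{G,k}(v)$ depends only on $v$'s outgoing speaking set $S^s_v$ (not on $S^\ell_v$); symmetrically for listening. Hence $U_G(v)=U^s_{G,k}(v)+U^\ell_{G,k}(v)$ splits into two independent optimizations, one over $S^s_v$ and one over $S^\ell_v$. Focusing on the speaking half, define $N_v:=\{w:v\in S^\ell_w\}$ and, for each $w\in N_v$, let $T_w$ be the set of vertices at distance $\leq k-1$ from $w$ in the effective speaking graph with $v$ deleted; these sets depend only on others' strategies. For any candidate outgoing set $B$, $R^s_{G,k}(v)=\{v\}\cup\bigcup_{w\in B\cap N_v}T_w=:R(B)$, and $U^s_v(B)=|R(B)|-c_s|B|$ is a monotone coverage function minus a linear cost in $B$.

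The hypothesis that no $s$-edge at $v$ is addable or removable translates to local marginal conditions at $A:=S^s_v$: for each $w\in A$, the coverage marginal $|T_w\setminus R(A\setminus\{w\})|\geq c_s$, and for each $w\in N_v\setminus A$, $|T_w\setminus R(A)|\leq c_s$ (edges into $V\setminus N_v$ can always be dropped without any reachability loss, so $A\subseteq N_v$). I would propagate these via submodularity of the coverage function $B\mapsto|\bigcup_w T_w|$ (decreasing marginal returns): the ``not addable'' condition at $A$ combined with submodularity on supersets gives $U^s_v(A\cup A')\leq U^s_v(A)$ for every $A'$, and the ``not removable'' condition at $A$ combined with submodularity on subsets gives $U^s_v(A\cap A')\leq U^s_v(A)$.

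The main obstacle is stitching these two one-sided inequalities into the full bound $U^s_v(A)\geq U^s_v(A')$. The naive paths $A\to A\cup A'\to A'$ and $A\to A\cap A'\to A'$ pass through intermediate sets at which the marginal conditions at $A$ no longer apply directly, and generic submodular maximization only yields a factor-$2$ relationship between local and global optima. My plan is to effect the swap $A\to A'$ by interleaving each addition from $A'\setminus A$ with a removal from $A\setminus A'$ in an order that keeps every intermediate marginal controlled by the local conditions at $A$, exploiting the specific structure of the $T_w$'s as graph-reachability sets rather than arbitrary coverage. The listening half is symmetric, completing the equivalence.
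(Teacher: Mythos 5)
Your forward direction is fine, and your reduction of each agent's problem to maximizing a coverage function minus a linear cost is the right frame for the converse; the two one-sided inequalities $U^s_v(A\cup A')\le U^s_v(A)$ and $U^s_v(A\cap A')\le U^s_v(A)$ do follow from submodularity exactly as you telescope them. But the step you defer --- stitching these into $U^s_v(A')\le U^s_v(A)$ --- is not a technicality that a cleverer interleaving will close: the ``if'' direction of the proposition is false as stated, so no such argument exists. (The paper asserts the proposition without proof, so there is nothing to compare against on that side.) Concretely, take $k=\infty$, $c_\ell=0$, $c_s=5$, and vertices $v,w_1,w_2,x_1,\ldots,x_{10}$ with edges $x_1\to x_2\to\cdots\to x_{10}\to x_1$, $w_1\to x_1$, $w_2\to w_1$, and $v\to w_1$. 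Deleting any existing edge costs its owner at least $9>c_s$ reachable vertices, and adding any absent edge gains its owner at most $2<c_s$ new reachable vertices, so no edge is addable or removable. Yet $v$ strictly improves by switching from $S_v=\{w_1\}$ (utility $11-5=6$) to $S'_v=\{w_2\}$ (utility $12-5=7$), so the network is not stable.

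This is exactly the nested-coverage phenomenon your framework makes visible: $T_{w_1}\subsetneq T_{w_2}$ with $|T_{w_2}\setminus T_{w_1}|=1<c_s<|T_{w_1}|$, so the swap ``drop $w_1$, add $w_2$'' is profitable even though the single addition gains too little and the single removal loses too much. Your proposed interleaving fails on precisely this instance: once $w_2$ is added, the marginal of $w_1$ collapses from $11$ to $0$, so no ordering keeps the intermediate marginals controlled by the local conditions at $A$; your remark that local search for submodular objectives gives only approximate optimality was pointing at the genuine obstruction, not at a gap in your technique. The correct conclusion to report is that only the ``only if'' direction holds in general, and that the converse needs extra hypotheses or a weaker conclusion (e.g., ``no single-edge deviation is profitable''). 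Note that this also affects the paper's own convergence argument for Theorem~\ref{thm:converge}, whose step 2 invokes the false direction.
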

\emph{Pairwise stability} is a common strengthening of the notion of stability. It is natural in social networks where, effectively, a link between two agents is formed only if both endpoints are in agreement, but either can unilaterally delete an edge. In our model, an agent's utility is never decreased by an incoming edge, hence there is no difference between stability and pairwise stability. However, in the bidirected case (when $c_s, c_\ell > 0$), an extended notion of pairwise stability where a speaking edge and its corresponding listening edge are consider in conjunction is natural.
\begin{defn}[Bidirected Pairwise Stability]
A strategy vector $\mathbf{S}$ is said to be {bi-pairwise stable} if for all pairs of agents $u, v$ 
\[ U_{\mathbf S - s_{uv} }(u) \leq U_{\mathbf S }(u) , \; U_{\mathbf S - \ell_{uv} }(u) \leq U_{\mathbf S }(u), \]
and if 
\[ U_{\mathbf S + s_{uv} + \ell_{vu}}(u) > U_{\mathbf S }(u) \; \mbox{ then } U_{\mathbf S + s_{uv} + \ell_{vu}}(v) < U_{\mathbf S }(v).
\]
\end{defn}
\vspace{-.1in}
For the remainder of this paper we refer to this notion as pairwise stability.

Often, notion of {fairness} or {global optimality} are important considerations.  
The utilitarian objective welfare of a set of strategies is the \emph{collective utility} of all of the agents; i.e., for the strategy set $G$, it is $\varphi(G) = \sum_v U_G(v)$. 
\begin{defn}[Efficiency and Symmetry]
We say a set of strategies $\mathbf S$, and the network $G$ it defines, is  {efficient} if it maximizes $\varphi(G)$. It is {symmetric} if $U_G(v) = U_G(u)$ for all $u,v \in V$, and is {asymmetric} otherwise. It is symmetric-efficient if it maximizes $\varphi(G)$ over to the set of symmetric networks. 
\end{defn}
\vspace{-.1in}
Note that, a priori, efficiency, symmetry, and stability need not be satisfied simultaneously. One aim of this work is to explore these relationships for our model.

Lastly, we give a couple of preliminary observations that will become useful in later proofs.
\begin{lem}
\label{pairwise_stable_if_stable}
If $G$ is bi-pairwise stable then $G$ is stable.
\end{lem}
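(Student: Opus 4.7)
The plan is to invoke the preceding proposition, which characterizes stability as the absence of any addable or removable edge, and then show that bi-pairwise stability implies both. The removability half is immediate from the definition: the conditions $U_{\mathbf S - s_{uv}}(u) \leq U_{\mathbf S}(u)$ and $U_{\mathbf S - \ell_{uv}}(u) \leq U_{\mathbf S}(u)$ built into bi-pairwise stability are exactly the statement that no edge is removable.

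The key intermediate step I would establish is a \emph{structural equivalence} that follows from bi-pairwise stability in the bidirected regime $c_s, c_\ell > 0$ (the regime in which the definition is introduced): namely, $s_{uv} \in E_s$ if and only if $\ell_{vu} \in E_\ell$ for every ordered pair $(u,v)$. For the forward direction, I would assume $s_{uv} \in E_s$ and $\ell_{vu} \notin E_\ell$ and contradict removability. By the definition of a speaking path, any step $x \to y$ requires both $s_{xy}$ and $\ell_{yx}$; since $\ell_{vu} \notin E_\ell$, no speaking path in $G$ can traverse $s_{uv}$, so $R^s_{G - s_{uv}}(u) = R^s_G(u)$. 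Removing $s_{uv}$ therefore saves $c_s > 0$ without any loss in reachability, yielding a strict utility increase for $u$ that contradicts the removability condition applied to the pair $(u,v)$. The reverse implication is symmetric, using $c_\ell > 0$ together with the removability condition on listening edges applied to $(v,u)$.

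With the equivalence in hand, the addability half falls out. Assume for contradiction that $s_{uv} \notin E_s$ is addable for $u$. If $\ell_{vu} \notin E_\ell$, the same path-matching argument gives $R^s_{G + s_{uv}}(u) = R^s_G(u)$, so $U_{G + s_{uv}}(u) = U_G(u) - c_s < U_G(u)$, contradicting addability. If instead $\ell_{vu} \in E_\ell$, then the structural equivalence forces $s_{uv} \in E_s$, contradicting the assumption that the edge is absent. The listening-edge case is entirely symmetric in $s \leftrightarrow \ell$.

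The main obstacle I anticipate is the careful justification of the structural equivalence — in particular, verifying that no speaking path (even one that cycles back through $u$) can contain the edge $s_{uv}$ without the matching $\ell_{vu}$, and likewise for the listening direction. Once that verification is clean, the proof collapses into short case checks. Notably, the addition clause of bi-pairwise stability is not needed for this lemma: everything is driven by the removability half together with the rigid pairing of speaking and listening edges that it enforces.
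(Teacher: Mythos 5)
Your proof is correct, and on the half of the argument that actually requires work it takes a different route from the paper's. The paper's proof shares your skeleton---rule out removable edges directly from the two removal inequalities in the definition, then rule out addable edges and invoke the proposition equating stability with the absence of addable and removable edges---but it dispatches the addability half in one line by asserting that a single addable edge ``contradicts the definition of pairwise stability.'' That assertion is not immediate, since the addition clause of bi-pairwise stability only constrains the \emph{joint} addition $s_{uv}+\ell_{vu}$; making it precise requires either your completeness argument or an appeal to the addition clause combined with the observation that an incoming edge never lowers the other endpoint's utility. Your intermediate step---that the removal clause alone forces $s_{uv}\in E_s \Leftrightarrow \ell_{vu}\in E_\ell$---is essentially Lemma~\ref{prop:bi_stable}, which the paper states and proves separately; you re-derive it from the weaker hypothesis of bi-pairwise stability and then use it to kill addability without ever touching the addition clause. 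What your route buys is rigor on the addability case plus the (correct and worth recording) observation that the addition clause is redundant for this implication; what the paper's route buys is brevity. One thing to make explicit: the structural equivalence needs $c_s,c_\ell>0$, since a removable edge requires a \emph{strict} utility gain, so the argument is confined to the bidirected regime---which, as you note, is the only regime in which bi-pairwise stability is defined.
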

\begin{defn}[Complete Edges]
\label{def:complete}
With some abuse of notation, we say a speaking edge $s_{vw}$ is complete if 
\[s_{vw} \in E_s \Rightarrow \ell_{wv} \in E_\ell \] We say a listening edge $\ell_{vw}$ is complete if  
\[\ell_{vw} \in E_\ell \Rightarrow s_{wv} \in E_s .\] 
\end{defn}
\vspace{-.1in}
 In any stable or efficient network, if $c_s, c_\ell > 0$, all edges are complete. Hence, despite allowing unilateral actions, agreement naturally emerges.
\begin{lem}
\label{prop:bi_stable}
If $c_s>0, c_\ell>0 $, then for any stable, pairwise stable, or efficient network $G$ all speaking and listening edges are complete.
\end{lem}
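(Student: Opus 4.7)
The plan is to exploit the tight pairing between speaking and listening edges demanded by the path definitions. Observe that in any speaking path $s_{vv_1},s_{v_1v_2},\ldots$ the edge $s_{v_{i-1}v_i}$ is paired with the listening edge $\ell_{v_iv_{i-1}}$, and in any listening path $\ell_{vv_1},\ell_{v_1v_2},\ldots$ the edge $\ell_{v_{i-1}v_i}$ is paired with the speaking edge $s_{v_iv_{i-1}}$. Hence an incomplete speaking edge $s_{vw}$ (i.e.\ one with $\ell_{wv}\notin E_\ell$) cannot appear in any path of either type: as a speaking step it would require the missing $\ell_{wv}$, and as the mate of a listening step it would again require the missing $\ell_{wv}$. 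Symmetrically, an incomplete listening edge $\ell_{vw}$ appears in no path. Consequently, removing an incomplete edge leaves $R^s_{G,k}(u)$ and $R^\ell_{G,k}(u)$ unchanged for every agent $u$.

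First I would dispatch the stability case. Suppose for contradiction that $s_{vw}\in E_s$ but $\ell_{wv}\notin E_\ell$. By the observation above, $R^s_{G-s_{vw},k}(v)=R^s_{G,k}(v)$; the listening part $U^\ell_{G,k}(v)$ depends only on $E_\ell$ and $d\ell_G^+(v)$ and is untouched; only $v$'s speaking-cost term shrinks, by exactly $c_s>0$. Thus $U_{G-s_{vw}}(v)>U_G(v)$, so $s_{vw}$ is removable — contradicting the proposition preceding the lemma (no edge is removable in a stable network). An incomplete listening edge is handled identically using $c_\ell>0$. Bi-pairwise stability reduces to the same one-edge computation, because its first two inequalities explicitly forbid an owner from profitably deleting any single $s_{uv}$ or $\ell_{uv}$.

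For efficiency, suppose $G$ is efficient yet contains an incomplete speaking edge $s_{vw}$, and let $G'=G-s_{vw}$. The extra point here is that the edge plays no role in any other agent's utility either: by the observation, for every $u$, neither $R^s_{G,k}(u)$ nor $R^\ell_{G,k}(u)$ changes; listening costs and every other agent's speaking cost depend only on their respective out-degrees of the untouched edge type, so are also unchanged. Only $v$'s speaking-cost term decreases, by $c_s>0$. Hence $\varphi(G')=\varphi(G)+c_s>\varphi(G)$, contradicting efficiency. Incomplete listening edges are treated identically, using $c_\ell>0$.

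There is no real obstacle here: the entire argument turns on the single structural observation about matched speaking/listening pairs in every path, after which each of the three cases collapses to a one-line utility calculation exploiting $c_s,c_\ell>0$.
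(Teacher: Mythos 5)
Your proof is correct and follows essentially the same route as the paper's: both argue that an incomplete edge contributes to no reachability set (since the path definitions force each speaking step to be paired with the reverse listening edge and vice versa), so deleting it strictly increases the owner's utility by the positive cost, contradicting stability, pairwise stability, and efficiency in turn. If anything, you are slightly more careful than the paper in justifying that \emph{other} agents' reachability sets are also unchanged, which the paper only asserts.
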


\begin{figure}
\centering
\vspace{-.2in}
\subfigure[][Homophily.]
{
\makebox[1.5in][c]{\includegraphics[height=1in]{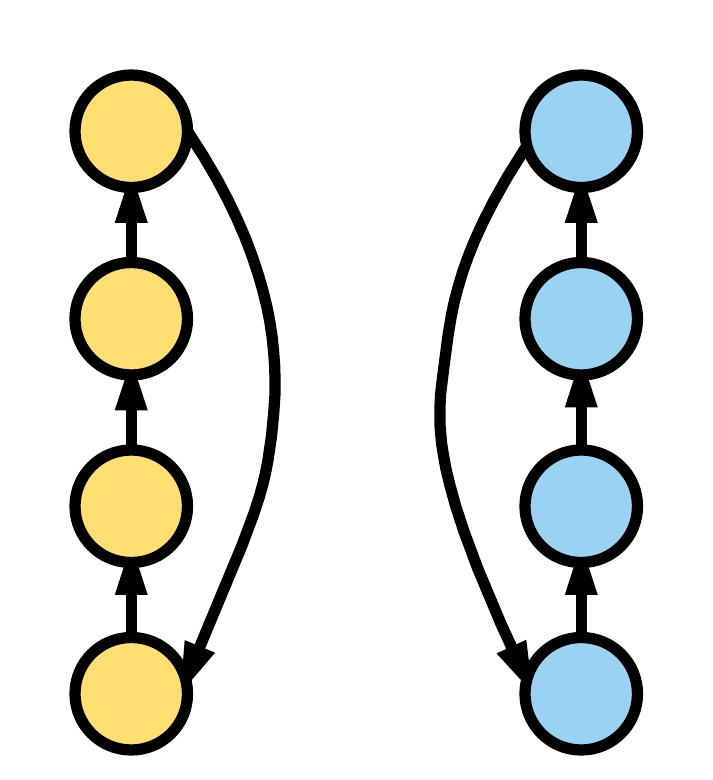}}
}
\hspace{.1in}
\subfigure[]
[Heterophily.]
{ 
\makebox[1.5in][c]{\includegraphics[height=1in]{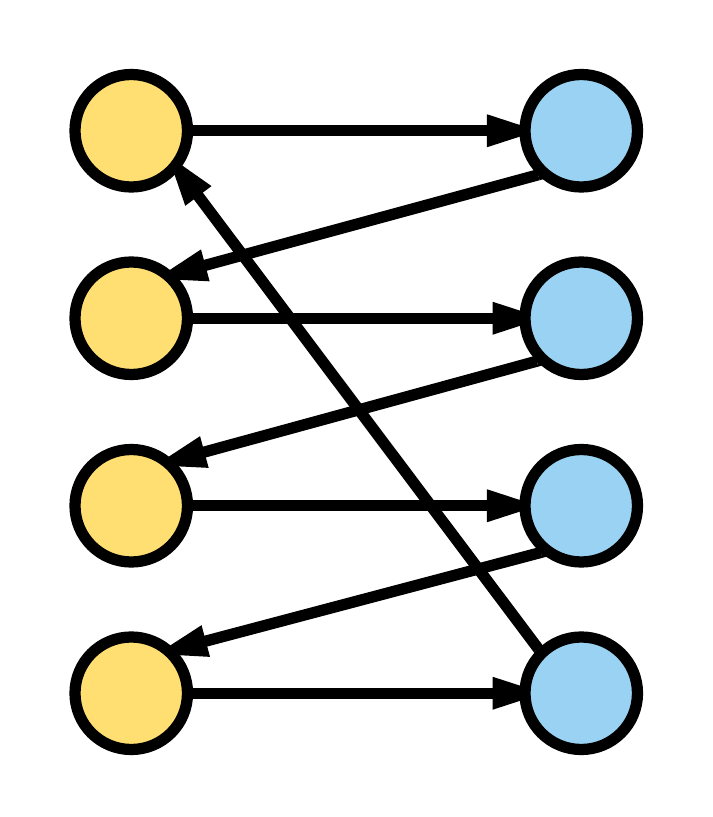}}
} 
\vspace{-.1in}
\caption{Stable structures may display heterophily or homophily depending on the structure in the rest of the network. In the example above, let $L$ and $R$ denote the nodes on the left and right respectively. Then \emph{both} of the above networks are stable for $k \geq 6$, $T(v) = L$ if $v\in L$ and $T(v) = R$ otherwise, $c_\ell = 0$ and $c_s < 3$.}
\vspace{-.1in}
\label{fig:homophily}
\end{figure}

\begin{figure*}
\centering
\hspace{-.3in}
\subfigure[][Community Clusters.]
{ 
\makebox[1.5in][c]{\includegraphics[height=1in]{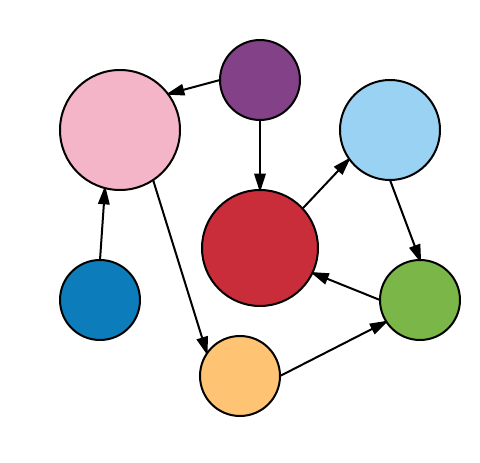}}
} 
\hspace{.1in}
\subfigure[][Broadcast Networks.]
{ 
\makebox[1.5in][c]{\includegraphics[height=1in]{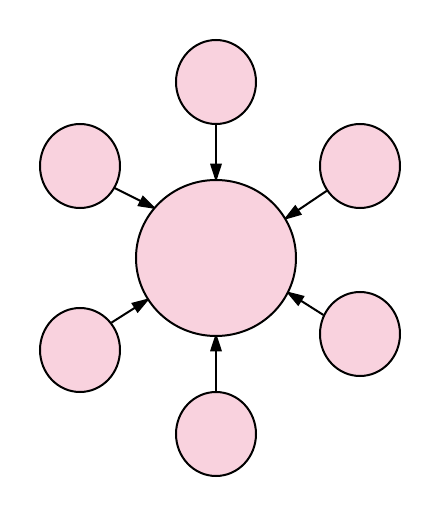}}
} 
\hspace{.1in}
\subfigure[][Brand Clusters.]
{
\makebox[1.5in][c]{\includegraphics[height=1in]{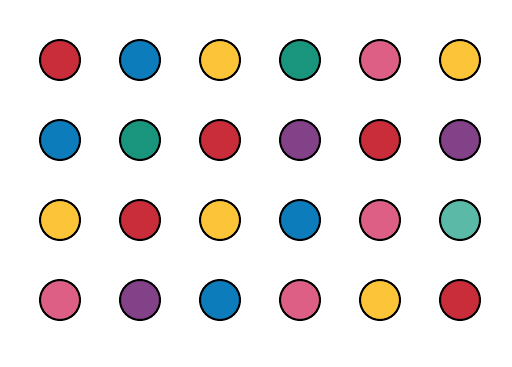}}
}
\hspace{.2in}
\subfigure[][Polarized Crowds.]
{
\makebox[1.5in][c]{\includegraphics[height=1in]{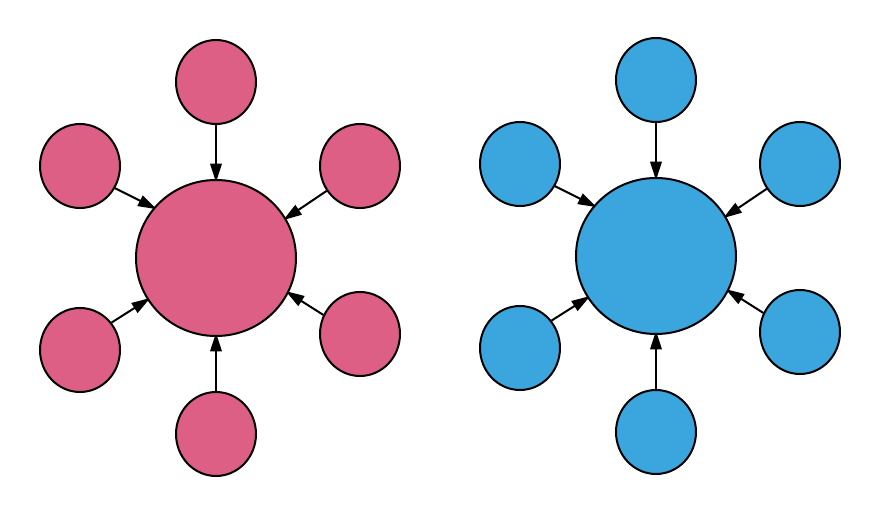}}
}
\caption{Depending on the parameters, stable networks may display many different global characteristics that are known to appear in real-world networks.}
\label{fig:networks}
\end{figure*}

\section{Nontrivial Network Structures}
\label{sec:network_properties}

In this section we present a series of toy examples that show how a variety of nontrivial network structures are supported by our model.

\subsection{Clustering}

The model can support both open and closed triangles, even within the same network,  for various ranges of parameters (see Figure~\ref{fig:clustering}). This indirectly implies that stable networks in this model may have any amount of clustering; e.g., by replicating node $z$ and with its $zu$ edge $d$ times in Figure~\ref{fig:clustering}, the clustering coefficient goes to 1 as $d \to 0$, and goes to 0 as $d \to \infty$.

\subsection{Homophily \& Heterophily}

A natural generalization of our model is to define target sets $T^s(v)$ and $T^\ell(v) \subseteq V$ and consider utilities
\[ U^s_{G,k}(v)  = \vert R^s_{G,k}(v) \cap T^s(v) \vert - c_s \cdot  ds_G^+(v), \] 
 \[ U^\ell_{G,k}(v)  = \vert R^\ell_{G,k}(v) \cap T^\ell(v) \vert - c_\ell \cdot  d\ell_G^+(v) . \]
Such a model captures agents who participate in a larger network, but are only interested in connecting to some subset of nodes. 
One could then wonder whether stable networks tend to exhibit homophily or heterophily. Interesting, even for the \emph{same} set of parameters, both can be supported (see Figure~\ref{fig:homophily}). That being said, roughly, the smaller that $k$ is and the larger that $c_s$ is, the more likely homophily is to occur. E.g., in the same figure, if $k = 3$ and $c_s > 1$ then only the leftmost network is stable. In effect, the network can no longer ``waste'' edges that do not directly reach the target sets. 

\subsection{Real-World Networks Supported}
\label{real_world}

We further note that a wide variety of information networks that appear in real-world  datasets such as twitter can be explained by our model as the structures that they form are fixed points of the dynamics for certain ranges of parameters. Here, we briefly consider four well-known real-world networks structures (using the terminology and characterization as described in the overview by \cite{RW2017}) to illustrate this point.  

\vspace{-.2in}
\paragraph{\bf Community Clusters} Such networks contain many different (possibly non-disjoint) communities, each of which has different interests or properties. Each node $v$ is interested in participating in some subset of these communities; these interests can be captured by $T^s(v)$ and $T^\ell(v)$ as mentioned above. The structure of a network of community clusters contains many medium or small size components that may or may not be directly connected to each other. Indeed, such structures emerge simply with the addition of the target sets as above. The size of the components varies depending on $k$. 
\vspace{-.2in}
\paragraph{\bf Broadcast Networks} In these networks people are mostly connected to or \emph{through} an existing central node and not so much to each other another.  News media outlets and influencers on social media are examples of such central nodes. Such networks contain a main large component; reaching this component results in reaching many other nodes, and are fixed points of the dynamics for our model, when, for example, $k=\infty$, and $T(v) \approx V$ for most $v \in V$. For example, the proof of convergence for Theorem \ref{thm:converge} shows a process that would reach some such network. The flower graph in Figure~\ref{fig:flower} is another example.

\vspace{-.2in}
\paragraph{\bf Brand Clusters} In such networks, there are many small components that are almost disconnected from each other. 
For an example of such networks, consider the discussions on twitter or similar social networks about local events, niche products or subjects. Individuals may mention or describe them, or connect to a small source, but do not reply to or discuss with a large group. 
Such structures are fixed points of our dynamics when $|T^s(v)|$ and $k$ are both small. 

\vspace{-.2in}
\paragraph{\bf Polarized Crowds} In such networks there are a small number of main poles (e.g., republican and democrat) and the target set of each node is only one of these poles. The networks in effect look like two or more copies of a broadcast network, but split according to target set. The nodes form two large (almost) disconnected components. Such network structures are fixed points of the dynamics in our model for sufficiently large $k$, when $V$ is  partitioned into (mostly) disjoint subsets such that target set of each node is the part of $V$ that it belongs to. Similar to the discussion on homophily, the smaller that $k$ is the more polarized we expect the networks to be.

\section{Convergence of Dynamics}
\label{convergence}

In studying the dynamics, different approaches are required for the special case where $c_\ell = 0$ (i.e., the network is directed) and more general bidirected setting (where $c_s, c_\ell > 0$). In the latter, an agent's hand is often forced; as reachability can only occur via paths of complete edges, $u$ never has incentive to add an edge $s_{uv}$ if $\ell_{vu}$ is not present (a layman's interpretation is to say that one cannot accept a connection that is not initiated). This, in effect, speeds convergence and makes our work easier. 
We consider both settings in Sections~\ref{sec:dir_dyn} and \ref{sec:bidir_dyn} respectively.

\subsection{The Directed Setting ($c_\ell = 0$)}
\label{sec:dir_dyn}

In this section we consider the setting where $c_\ell = 0$. Recall that in this case we can assume $S_\ell = V$ and the model reduces to that for a directed network with only speaking edges. Moreover, for the directed setting, our results only hold for $k = \infty$. Hence, for ease of notation we drop all $s$, $\ell$ and $k$ demarkations, e.g., $G = (V, E) = (V, E_s)$, $c = c_s$, $U_G = U_{G,k}^s$, and $uv = s_{uv}$.

It is a priori not clear that any dynamics should converge, or that any interesting structure could emerge at a fixed point of the dynamics.\footnote{E.g., recall that  the authors of \cite{BaGo2000} find that synchronous best-response dynamics must converge to either the cycle or the empty network under the same conditions.}
On the other hand, as is evident in our proof of following theorem, a nontrivial class of networks are fixedpoints of the edge dynamics. 

\begin{thm}
\label{thm:converge}
The edge dynamics for $k = \infty$ converge to a stable graph. 
\end{thm}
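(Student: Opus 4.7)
The edge dynamics define a Markov chain on the finite set of directed graphs on $n$ vertices, and by the proposition preceding the theorem, the absorbing states of this chain are precisely the stable graphs. A standard finite-state argument then reduces almost-sure convergence to the combinatorial claim that, from any graph $G$, some stable graph is reachable via a finite sequence of valid single-edge moves; indeed, any such sequence of length $T$ has probability at least $n^{-2T}$, since each round selects a specified edge with positive probability and the prescribed add/delete then happens deterministically. The rest of the proof is to exhibit such a reaching sequence, designed to terminate in a broadcast-type network.

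\textbf{Constructing the reaching sequence.} Pick a vertex $v^*$ with $|R_G(v^*)|$ maximal in the current graph. In Phase A, while some edge $v^* \to w$ is addable (i.e., $|R_G(w) \setminus R_G(v^*)| > c$), have $v^*$ add it; every such move strictly grows $|R_G(v^*)|$ by more than $c$, so the phase ends after at most $n/c$ steps with a saturated reachable set $R^* := R_G(v^*)$. In Phase B, for each $u \neq v^*$ with $v^* \notin R_G(u)$, have $u$ add the edge $u \to v^*$, scheduled in an order (for instance, by increasing $|R_G(u) \cap R^*|$) that keeps the addability condition $|R^* \setminus R_G(u)| > c$ satisfied at the moment of execution. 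In Phase C, greedily delete every currently removable edge; since $|E|$ decreases each step, this phase terminates as well. The result is the target graph $G^*$.

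\textbf{Stability of $G^*$ and main obstacle.} The resulting $G^*$ has no removable edge, because Phase C exhausted them, and I would argue it has no addable edge either: Phase A guaranteed no addable out-edge for $v^*$, and after Phase B every vertex reaching $v^*$ has $R_{G^*}(u) \supseteq R^*$, so the marginal reachability gain from any new edge is at most $c$, inheriting $v^*$'s saturation. The principal obstacle is Phase B: ensuring that each funnel edge $u \to v^*$ is addable at the precise round when $u$ moves, since earlier additions alter the reachable sets of vertices upstream of $u$. The ordering by increasing overlap with $R^*$ gives the needed margin whenever $|R^*|$ is large relative to $c$; the degenerate case $|R^*| \le c$ must be handled separately, by instead driving the graph toward a near-empty stable target (the empty graph is stable whenever $c \ge 1$, since then no single edge provides gain exceeding $c$ from the empty graph). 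A secondary subtlety is verifying that the interplay between Phase B additions and Phase C deletions does not resurrect addable edges, which follows from the same saturation argument: once Phase A fixes $R^*$, every vertex whose reachable set contains $R^*$ inherits the property that no further out-edge can contribute more than $c$ new reachable vertices.
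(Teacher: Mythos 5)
Your reduction -- model the dynamics as a finite Markov chain whose absorbing states are the stable graphs, and exhibit from every $G$ a finite sequence of addable/removable single-edge moves ending at a stable graph -- is exactly the paper's framing, so the setup is fine. The gap is in the construction of the path, and it is not a matter of polish: Phase~B and the stability claim for $G^*$ both fail. Consider a vertex $u$ with $v^* \notin R_G(u)$ but $|R^* \setminus R_G(u)| \leq c$ (for instance, $u$ reaches every vertex of $R^*$ except $v^*$ itself through some other route). The edge $u \to v^*$ yields a marginal gain of at most $c$ and is therefore \emph{never} addable; no scheduling order can repair this, because the obstruction is intrinsic to $u$'s reachable set, not to the state of upstream vertices. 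Worse, for such a $u$ the endpoint need not be stable: the gain of an arbitrary edge $u \to w$ is bounded only by $|(R_G(w)\cup\{w\}) \setminus R^*| + |R^* \setminus R_G(u)| \leq 2c$ (the first term is $\leq c$ by Phase~A saturation, the second by assumption), which does not rule out addability. So addable edges can survive all three phases, and your "inheritance of saturation" argument only covers vertices whose reachable set actually contains $R^*$. A second unrepaired hole: Phase~C deletions shrink reachable sets, which can create fresh addable edges, so a single pass of A, B, C does not suffice -- you must iterate, and then you owe a termination argument, which is precisely the hard part. Finally, the degenerate case cannot be handled by "driving the graph toward a near-empty stable target": the dynamics only delete \emph{removable} edges, so the empty graph is generally not reachable.

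The paper's proof is organized around exactly these difficulties. It first proves that only inter-component edges can be addable and that the graph of maximal strongly connected components is a DAG, then distinguishes \emph{large} components (size $\geq c$) from small ones, and interleaves "delete all removable edges" with the addition of one carefully chosen inter-component edge (leaf-to-root among large components, merging two large components, funneling small components into the unique remaining large component). Termination is obtained from explicit potentials: the number of large components drops after the merging steps, the number of components lacking a direct edge into the large component drops in the next step, and the final step permanently removes at least one edge from ever being addable again. If you want to salvage your broadcast-network intuition, you would need to replace Phase~B with a component-level analysis of this kind and supply a monotone quantity that certifies the A/B/C loop halts.
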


\begin{proof}
In order to prove convergence, it suffices to show that for any initial network $G_1$, there exists a finite sequence of graphs $G_1, \ldots, G_s$ such that $G_i \neq G_j$ for any $i \neq j$ and $G_i$ and $G_{i+1}$ differ by a single edge that is either addable or removable in $G_i$ such that $G_s$ is stable. In other words, there is a \emph{path to convergence} from any state which only uses addable or removable edges. If this path is finite, there is some (albeit tiny) probability, which is bounded away from 0, that this sequence of edges is selected. Hence, it suffices to construct such a sequence. To that end let us first state some defenitions.

\begin{defn}
We represent each component by the set of its vertices, we call a component large if the set of its vertices is of size at least $c$ and we call it small otherwise.
\end{defn}

\begin{defn}
We call a component/vertex \emph{root} if it has no incoming edge, \emph{leaf} if it has no outgoing edge and isolated if it has none of them.
\end{defn}

We  construct the sequence as follows:
\begin{enumerate}[topsep=0pt,itemsep=0ex,partopsep=0ex,parsep=0pt]
\item Recursively delete removable edges one at a time until no removable edge remains. 
\item If there is no addable edge then we have reached a stable graph. 
\item Otherwise, partition the network into maximal strongly connected components; the component-level network must be a directed acyclic network (Lemma~\ref{lem:dag}). It is often convenient for us to restrict to the large-components. Roots and leaves are then defined within the large-component graph. Note that there must be at least one large component, otherwise we contradict the fact that an edge is addable (Lemma~\ref{lem:onebig}).
\item For each root $T_i$ of the large-component graph, designate a special vertex $r_i$. Note that for any (large or small) component $C$ such that $T_i \not\subseteq R_G(C)$, the edge $xr_i$ is addable for any $x \in C$ (Lemma~\ref{lem:addtoroot}); hence, the edges specified in steps 5-7 must be addable.
\item If some leaf $L_i \neq T_i$ exists in the large-component network rooted at $T_i$, add the edge $\ell_i r_i$ for some $\ell_i \in L_i$ and go back to step 1. 
\item Otherwise, all large-components have no edges between them in the large-component graph. If there are more than one large components, let $T_1, T_2$ be any two large-components such that $T_1 \not\subseteq R_G(T_2)$. Add the edge $r_2r_1$. If $T_2 \not\subseteq R_{G+r_2r_1}(T_1)$, also add the edge $r_1r_2$. Go back to step 1. 
\item Otherwise, there is exactly one large-component $T_1$. If there exists a small component $S_j$ that is a leaf, add the edge $s_jr_1$ for some $s_j \in S_j$ and go back to step 1.
\item Again, there is exactly one large-component $T_1$, and this set $T_1$ is reachable from every component (otherwise we would be in step 7). Moreover, there must be at least one small component $S_k$ that is a root and in which $|R_G(S_k) \setminus T_1| > c$ (Lemma~\ref{lem:step8}). Let $t_kr_k$ be an edge that reaches $T_1$ on a path from $S_k$, i.e.,  $r_k \in T_1$ and $t_k \not\in T_1$, and $t_k \in R_G(S_k)$. Add the edge $r_ks_k$ for some $s_k \in S_k$ and go back to step 1.
\end{enumerate}

We prove in Lemma~\ref{lem:addinsamecomp} that the only addable edges are between two different components; hence only inter-component edges must be addable; the above steps cover all possible types of addable edges in sequence until none remain.

After step 5+1 (alternatively 6+1), the number of large components that existed before step 5 (alternatively 6) is reduced by at least one (see Lemma~\ref{lem:rem5} and Lemma~\ref{lem:rem6} respectively). Hence, we will reach step 7 in finitely many rounds.

In step 7 there is exactly one large component (see also Lemma~\ref{lem:onebig}), and the number components \emph{without} a direct edge to $T_1$ is reduced by one. Moreover, after step 7 there are no removable edges, so the number of large components and the size of the largest component does not change in step 7+1 (Lemma~\ref{lem:rem7}). Thus, we never go back to steps 5 or 6, and after finitely many rounds, we will move on to step 8. 

In step 8 again there is exactly one large component. In step 8+1 the the number of large components does not increase, thus, we never go back to step 5 or 6. While we may go back to step 7, however, every time we complete step 8 we have removed at least one edge from being addable at any point in the future (Lemma~\ref{lem:rem8}). Hence, we can remain in steps 7 and 8 for finitely many rounds, and the process will terminate in step 2. 
\end{proof}

One important question that remains open is the time until convergence; in particular, we conjecture that the convergence time is \emph{fast}. In effect, this is equivalent to showing that there are many short paths to convergence. Proving this, however, remains a challenging technical open problem.

\subsection{The Bidirected Setting $(c_s, c_\ell > 0)$}
\label{sec:bidir_dyn}

We now go back to the bidirected setting. As observed above, the convergence is less surprising; as edges that are not complete are easily deleted, and furthermore are never added back to the solution (see Lemma~\ref{bidirectional removed edges never added})

\begin{thm}
\label{thm:converg_bidirected}
If $c_s, c_\ell >0$, then the edge dynamics (see Definition~\ref{df:edg_dynamics}) converge to a stable network in (expected) polynomial time in the number of nodes.
\end{thm}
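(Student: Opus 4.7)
The plan is to track the monovariant
\[\Phi(G) := \left|\left\{(u,v)\in V\times V:\;u\neq v,\;s_{uv}\in E_s\;\text{or}\;\ell_{vu}\in E_\ell\right\}\right|,\]
i.e., the number of ordered pairs with at least one of the two directed edges present. Since $\Phi(G)\in\{0,1,\ldots,n(n-1)\}$ and the dynamics terminate when no addable or removable edge remains, I would reduce the proof to showing (i) that $\Phi$ is non-increasing along the dynamics and (ii) that from any non-stable state, $\Phi$ strictly decreases in expected polynomial time.

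For (i), I would observe that when $c_s,c_\ell>0$ any incomplete edge is always removable---it contributes zero reachability yet carries positive cost---while $s_{uv}$ can be addable only once $\ell_{vu}$ is already present (otherwise the addition yields zero reachability gain and strictly decreases $u$'s utility by $c_s$), and symmetrically for $\ell_{vu}$. Combined with Lemma~\ref{bidirectional removed edges never added}, this shows that once a pair $(u,v)$ reaches the ``dead'' state with both directed edges absent, neither can ever be re-added; hence $\Phi$ is monotonically non-increasing throughout the dynamics.

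For (ii), fix any non-stable state $G$. Each round selects a potential edge uniformly at random from $2n(n-1)$ potentials, so any fixed removable edge is chosen with probability at least $1/(2n(n-1))$. If $G$ contains an incomplete edge, it is removable and its selection sends its pair to the dead state, decreasing $\Phi$ by $1$ in expected $O(n^2)$ rounds. If $G$ is all-complete but non-stable, then by Lemma~\ref{prop:bi_stable} together with non-stability some complete edge must be removable; selecting and removing it (expected $O(n^2)$ rounds) creates an incomplete partner edge, reducing to the previous case in a further expected $O(n^2)$ rounds. Summing over the at most $n(n-1)$ strict decreases of $\Phi$ then yields an overall expected convergence time of $O(n^4)$.

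The main obstacle will be to guarantee that the expected waiting time to the next $\Phi$-decrease is genuinely polynomial, even while the dynamics execute internal transitions of the form $B\leftrightarrow S$ and $B\leftrightarrow L$ that preserve $\Phi$. The chicken-and-egg property above keeps this tractable: in every non-stable state either an incomplete edge already exists, or by Lemma~\ref{prop:bi_stable} the first subsequent event must remove a complete edge and thereby create one. Once an incomplete edge is present, the probability of selecting and removing it on any given round is at least $1/(2n(n-1))$ and is unaffected by concurrent internal transitions, so a standard geometric-waiting-time calculation delivers the polynomial bound. Because $\Phi$ is monotonically non-increasing, past progress is never undone.
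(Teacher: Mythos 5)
Your proposal follows essentially the same route as the paper's proof: both hinge on the two facts that an incomplete edge is always removable and that a pair with both directed edges absent can never be revived (the paper's Lemmas~\ref{certainty of removing single edge} and~\ref{bidirectional removed edges never added}), and both drive a bounded potential down to a stable state; your potential $\Phi$ (the number of live ordered pairs) is in fact cleaner than the paper's $p_G+m_G$, since $\Phi$ really is non-increasing under additions (an addition requires the partner edge to be present, so the pair was already live). The one inaccurate step is the claim that the removability of a fixed incomplete edge $s_{uv}$ is ``unaffected by concurrent internal transitions'': the partner $\ell_{vu}$ may itself be addable precisely because $s_{uv}$ is present, and if it is selected first the pair becomes complete and $s_{uv}$ need no longer be removable, so the geometric waiting time for that particular edge is not guaranteed. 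This does not sink the argument, only the constants: from \emph{any} non-stable state, either some incomplete edge exists, in which case a single selection (probability at least $\frac{1}{2n(n-1)}$) kills its pair, or every edge is complete, in which case no edge is addable, some complete edge is removable, and two consecutive selections (probability at least $\bigl(\frac{1}{2n(n-1)}\bigr)^2$) kill a pair. Since $\Phi$ never increases and $\Phi \leq n(n-1)$, this yields expected $O(n^6)$ rounds, matching the paper's bound rather than your stated $O(n^4)$.
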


\begin{proof}
Let $p_G$ be the number of complete edges (see Definition~\ref{def:complete}) and let $m_G$ be the number of edges in network $G$.
We will prove the theorem by induction on $p_G+m_G$.

\textbf{Base Case:} Suppose that $p_G+m_G=0$, thus $p_G=0,\ m_G=0$ so $G$ is empty and by Proposition~\ref{prop:empty}, the empty network is stable.

\textbf{Inductive step:}
For sake of contradiction, assume that the dynamics do not converge to a stable graph. Thus, since the number of possible networks is finite, there must exist a sequence of networks that we cycle over. If there exists $v,w \in V$ such that $s_{vw}\in E_s$ and $\ell_{wv}\notin E_\ell$, then with  probability at least $\dfrac{1}{2}\cdot\dfrac{1}{n}\cdot\dfrac{1}{n-1} $, the edge $s_{vw}$ is chosen. By Lemma~\ref{certainty of removing single edge}, this edge is removable. Therefore, the number of edges is reduced, so $p_{G-s_{vw}}+m_{G-s_{vw}}=p_G+m_G-1$. Thus by the induction hypothesis, $G-s_{vw}$ converges to a stable network. Similarly, if $s_{vw}\notin E_s$ and $\ell_{wv}\in E_\ell$ an analogous proof follows.

Otherwise, for all ${v,w \in V}$, if $\ s_{vw}\in E_s$ then $\ell_{wv}\in E_\ell$ and if $\ell_{vw}\in E_\ell$ then $s_{wv}\in E_s$. Since $G$ is not stable, there must exist an edge which addable or removable. By Lemma~\ref{bidirectional removed edges never added} we know that no removed edge will be added again. Hence, the only possible change is to remove an edge. Suppose that $s_{vw}$ is an existing removable edge for $v$. The probability of selecting $s_{vw} \text{ and } \ell_{wv}$ consecutively is at least $\left(\dfrac{1}{2}\cdot\dfrac{1}{n}\cdot\dfrac{1}{n-1}\right)^2$. Moreover, we know that if $s_{vw}$ and $\ell_{wv}$ are selected in sequence then both will be removed as $s_{vw}$ is removable, and then $\ell_{wv}$ will be removable by Lemma \ref{certainty of removing single edge}. Thus, for new network we have that $p_{G-s_{vw}-\ell_{vw}}+m_{G-s_{vw}-\ell_{vw}}=p_G+m_G-3$. 

The convergence happens in expected polynomial time because the probabilities, as computed above, shows that after $O(n^4)$ moves in expectation, the value of $p_G+m_G$ will decrease. We know that $p_G+m_G = O(n^2)$. Thus, after at most $O(n^6)$ moves in expectation we will reach a stable network.  
\end{proof}

\begin{lem}
\label{certainty of removing single edge}
Suppose that $c_s>0$ and $v,w \in V$. If $s_{vw} \in E_s ,\; \ell_{wv}\notin E_\ell$, then $s_{vw}$ is removable.
\end{lem}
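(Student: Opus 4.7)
The plan is to show that removing $s_{vw}$ strictly increases $v$'s utility by showing the speaking-reachability set $R^s_{G,k}(v)$ does not shrink while the speaking cost decreases by $c_s$. The key observation lies in the definition of a speaking path: any speaking path from $v$ that uses $s_{vw}$ as its first edge $s_{vv_1}$ (with $v_1 = w$) requires, among its listening edges, the edge $\ell_{v_1 v} = \ell_{wv}$. Since $\ell_{wv} \notin E_\ell$ by hypothesis, no speaking path from $v$ can begin with $s_{vw}$, and hence $s_{vw}$ does not contribute to the set of vertices $u$ that are speaking-reachable from $v$.

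From here I would proceed as follows. First, I would verify that $R^s_{G-s_{vw},k}(v) = R^s_{G,k}(v)$: one containment is automatic since removing an edge can only shrink reachability, and the reverse containment follows from the observation above, since any speaking path witnessing $u \in R^s_{G,k}(v)$ either does not use $s_{vw}$ at all or uses it as the first edge from $v$ — but the latter is impossible without $\ell_{wv}$. Second, I would check that the listening utility of $v$ is unaffected: by the symmetric definition, a listening path from $v$ to some $u$ uses speaking edges of the form $s_{u v_{k-1}}, s_{v_{k-1}v_{k-2}}, \ldots, s_{v_1 v}$, all of which point along the path toward $v$, so the outgoing speaking edge $s_{vw}$ never appears in any listening path out of $v$; thus $R^\ell_{G-s_{vw},k}(v) = R^\ell_{G,k}(v)$ and $d\ell^+_G(v)$ is unchanged.

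Putting these together, the only term in $U_G(v) = |R^s_{G,k}(v)| - c_s\, ds^+_G(v) + |R^\ell_{G,k}(v)| - c_\ell\, d\ell^+_G(v)$ that changes upon deleting $s_{vw}$ is the speaking cost, which drops by exactly $c_s > 0$. Therefore $U_{G - s_{vw}}(v) = U_G(v) + c_s > U_G(v)$, which is precisely the definition of $s_{vw}$ being removable.

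The main obstacle, such as it is, is purely definitional: one must be careful with the asymmetric roles of speaking and listening edges in the two kinds of paths, and in particular verify that the ``incomplete'' speaking edge $s_{vw}$ is genuinely useless to $v$ both for its speaking reach (because the required companion listening edge $\ell_{wv}$ is missing) and for its listening reach (because outgoing speaking edges from $v$ simply do not appear in listening paths originating at $v$). No combinatorial argument beyond this bookkeeping is needed.
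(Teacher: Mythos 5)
Your proof is correct and follows essentially the same approach as the paper's: the missing companion edge $\ell_{wv}$ means $s_{vw}$ contributes nothing to $v$'s speaking reachability, so deleting it saves $c_s>0$ at no loss. Your version is more careful than the paper's (explicitly checking the listening side and the strict inequality), but it is the same argument.
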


\begin{proof}
We know that the profit of existence a speaking edge is made by providing speaking reachability. Moreover since $\ell_{wv} \notin E_\ell$, by the definition of reachability, there is no path from $v$ to arbitrary node $z$ that passes through $s_{vw}$ so the gain of it zero. Therefore removing $s_{vw}$ has no loss for $v$.
\end{proof}

\begin{lem}
\label{bidirectional removed edges never added}
If $c_s, c_\ell >0$ and there exist vertices $ v,w\in V$ such that $s_{vw}\notin E_s ,\ \ell_{wv}\notin E_\ell$, then the edge dynamics will never add it back.
\end{lem}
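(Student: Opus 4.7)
The plan is to argue by induction on the time step of the dynamics that both $s_{vw}$ and $\ell_{wv}$ remain absent forever once they are simultaneously absent. Since in each round the edge dynamics alter at most one edge, the inductive step reduces to the following claim: in any network $G$ with $s_{vw}\notin E_s$ and $\ell_{wv}\notin E_\ell$, neither $s_{vw}$ is addable for $v$ nor is $\ell_{wv}$ addable for $w$.

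The key observation is structural, driven directly by the definition of speaking/listening paths. Any speaking path out of $v$ whose first hop is the edge $s_{vw}$ has the form $s_{vw}, s_{wv_2}, \ldots, s_{v_{k-1}u}$, and by definition it must be accompanied by matching listening edges $\ell_{uv_{k-1}}, \ldots, \ell_{v_2 w}, \ell_{wv}$. Since $\ell_{wv}\notin E_\ell$, no such path exists in $G+s_{vw}$, so $R^s_{G+s_{vw}}(v)=R^s_G(v)$ and the utility change for $v$ from adding $s_{vw}$ is exactly $-c_s<0$. Hence $s_{vw}$ is not addable. Symmetrically, any listening path out of $w$ beginning with $\ell_{wv}$ requires $s_{vw}$ in its reverse chain of speaking edges; since $s_{vw}\notin E_s$, we get $R^\ell_{G+\ell_{wv}}(w)=R^\ell_G(w)$ and the utility change for $w$ from adding $\ell_{wv}$ is $-c_\ell<0$, so $\ell_{wv}$ is not addable either.

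Combining, whenever both $s_{vw}$ and $\ell_{wv}$ are absent, the dynamics (which only add edges that are addable) cannot add either of them in the next round, while simultaneously no other edge selection could introduce them since the dynamics touches only the single selected edge. This preserves the invariant, the induction closes, and so these edges are never added back. I do not anticipate any real obstacle: the lemma is essentially a one-line consequence of the fact that a unilateral edge contributes nothing to its owner's reachability in the absence of the complementary reverse edge, so that without that complementary edge it is pure cost.
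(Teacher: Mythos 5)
Your proof is correct and uses essentially the same argument as the paper: once both $s_{vw}$ and $\ell_{wv}$ are absent, each one alone contributes nothing to its owner's reachability (since a speaking/listening path requires the matching reverse edge), so adding it changes utility by exactly $-c_s$ or $-c_\ell$, hence neither is ever addable. Your explicit induction on time steps, using the fact that the dynamics touch only one edge per round, just makes rigorous the invariant that the paper leaves implicit.
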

\begin{proof}
We will prove the contrapositive. Assume that some such $s_{vw}$ is added back. We know that $\ell_{wv} \notin E_\ell$. Thus, by the definition of reachability, there is no path from $v$ to any arbitrary node $z$ which passes through $s_{vw}$. Hence, $z \in R^s_{G}(v)$ if and only if $z \in R^s_{G+s_{vw}}(v)$. Therefore, $\vert R_G^s(v) \vert = \vert R_{G+s_{vw}}^s(v)\vert $. However, $ds_{G+s_{vw}}^+(v)=ds_G^+(v)+1$ so $U_{G+s_{vw}}(v)=U_G(v)-c_s$ and since $c_s>0$ thus $U_{G+s_{vw}}(v)<U_G(v)$. Therefore agent $v$ does not gain by adding $s_{vw}$ and wo not add it for ever. The similar proof used for listening edges. 
\end{proof}

Using this lemma, we give a very different proof than the one for the directed setting. In addition to working for arbitrary $k$, this proof also allows us to conclude that the convergence is \emph{fast}.

\section{Stability and Efficiency}
\label{sec:stab}

\begin{figure}[t]
\begin{center}
\hspace{-1in}
\includegraphics[width = 2.36in]{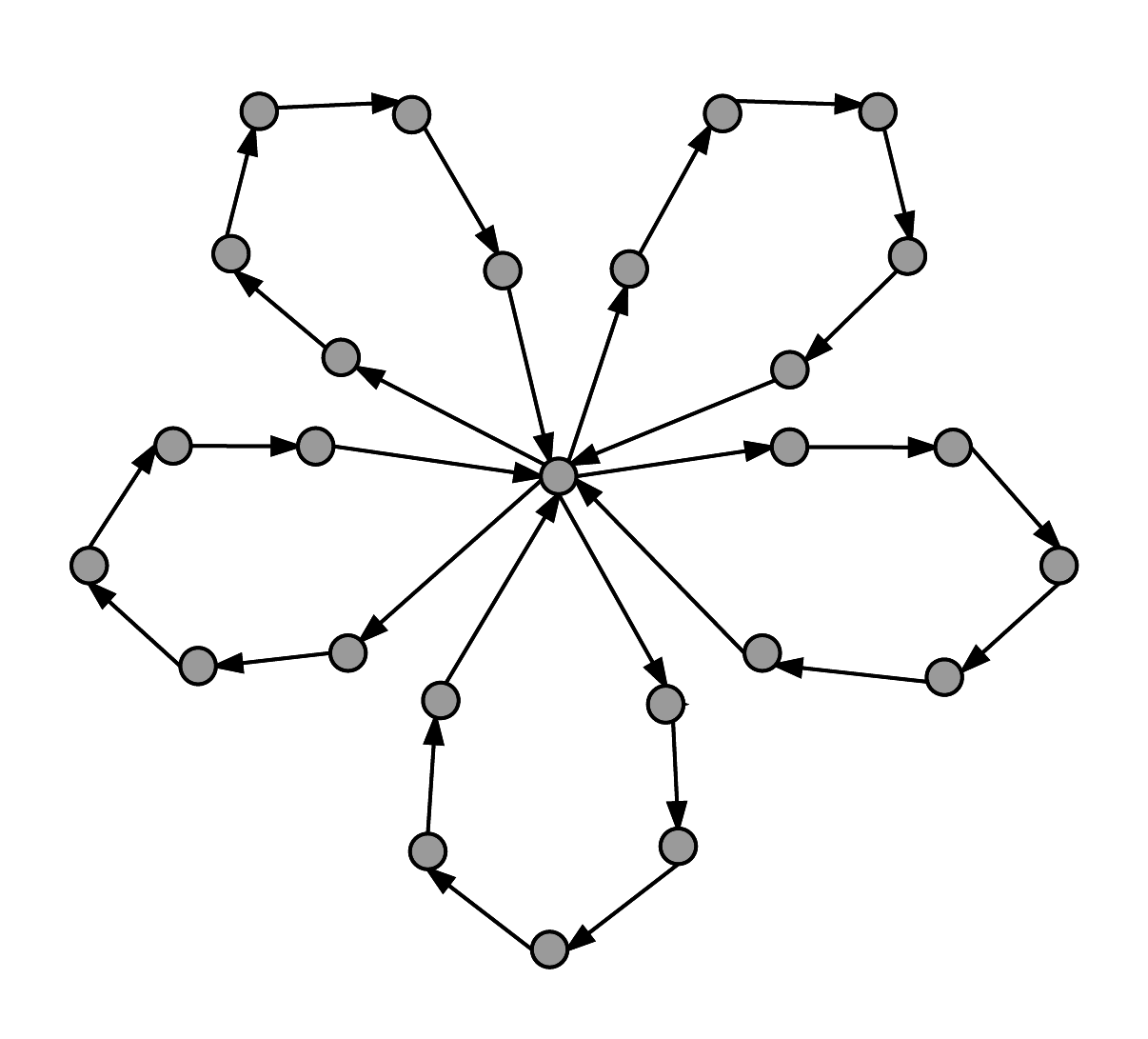}
\hspace{-1in}
\vspace{-.06in}
\caption{A balanced flower network with $n = 26$ vertices and $k=10$. This is efficient and stable for $c < 5$.}
\label{fig:flower}
\end{center}
\end{figure}

In this section we construct classes of efficient and stable networks. 
We first consider nontrivial ranges of parameters $k > 1$ and  $c_s, c_\ell \leq n-1$ (see Section~\ref{sec:extremal} for a discussion of extremal parameter ranges and networks). 
For this range of parameters, we see interesting networks emerge in the study of efficient and stable networks. 

\begin{rem}
For ease of notation, in the first part of this section we consider only the directed version of our model (i.e., without loss of generality, $c_\ell = 0$). However, all results and proofs follow immediately for the bidirectional case by replacing the directed graphs in question by the analogous bidirected network where each directed edge is replaced by a complete bidirected edge. In the statement of the theorems, constraints on $c$ apply to both $c_s$ and $c_\ell$.
\end{rem}
\vspace{-.1in}

The first network we consider, called \emph{balanced flower graph}, is defined for $k \leq 2\sqrt n$ and is constructed as follows: Make a directed cycle of length {$\floor {\xfrac k 2} + 1$} . Select one node from this set and call it the \emph{center}. As long as at least $\floor {\xfrac k 2}$ nodes remain, select them, and, along with the center node, form another directed cycle. Repeat until fewer than $\floor {\xfrac k 2}$ nodes remain; then remove one non-central node from each petal (severing its edges and connecting its predecessor and successor) until you have $\floor {\xfrac k 2} - 1$ nodes and form them into the final petal. We denote by $q$ the number of petals. Note at most one node is removed from each petal in balancing since since $k \leq 2\sqrt n$, and hence {$q \geq \floor {\xfrac k 2}$}.  Note that the balanced flower network has diameter at most $k$. 
See Figure~\ref{fig:flower} for an example.

\begin{thm}
\label{thm:flower}
For any $4 \leq k \leq 2\sqrt n$ and $3 \leq n$, the social welfare of the balanced flower network (see Figure~\ref{fig:flower}) is  
{\[ n(n-1)- c\left\lceil{\frac{n-1}{\floor {\frac k 2}}}\right\rceil -c(n-1)\]}
Moreover, if $1 \leq c < \floor{\xfrac k 2}-1$ the balanced flower network is efficient and pairwise stable.
\end{thm}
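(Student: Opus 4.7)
First, I compute the welfare. The balanced flower has diameter at most $k$: two vertices in the same petal are within $\lfloor k/2 \rfloor$ steps along that petal's cycle, and two vertices in different petals are connected via the center in at most $2\lfloor k/2 \rfloor \leq k$ steps, so $\sum_v |R(v)| = n(n-1)$. Counting edges, every non-center vertex has out-degree $1$ (its successor in its petal) and the center has out-degree $q = \lceil (n-1)/\lfloor k/2 \rfloor \rceil$, giving $|E| = (n-1) + q$ and hence the stated welfare $n(n-1) - c(n-1) - cq$. Per the remark opening the section, I work in the directed setting throughout, and the bidirected case follows by replacing each directed edge with a complete bidirected pair.

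Next, pairwise stability reduces to checking single-vertex deviations. Adding any edge provides no new reachability (diameter is already $\leq k$) but costs $c>0$, so no agent adds. Removing the sole outgoing edge of any non-center vertex $v$ makes $|R(v)|=0$, a loss of $n-1 > c$. The center has $q$ outgoing edges, one per petal; removing any of them disconnects the center from at least $\lfloor k/2 \rfloor - 1$ non-center vertices in that petal, a loss exceeding $c$ by the hypothesis $c < \lfloor k/2 \rfloor - 1$. Hence no single deviation is profitable, and the flower is pairwise stable.

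For efficiency I compare against any competitor graph $G$ with $m$ edges and split into two cases. If $G$ is strongly connected with diameter $\leq k$, then $\sum_v |R(v)| = n(n-1)$, so it suffices to prove the structural lower bound $m \geq n - 1 + q$; otherwise some pair $(u,v)$ has $u \notin R(v)$, and I bound the total reachability deficit $\sum_v (n-1-|R(v)|)$ against the edge savings $c((n-1+q) - m)$, using $c < \lfloor k/2 \rfloor - 1$ to conclude that the deficit dominates. The main obstacle is the structural lower bound in the first case: my planned approach is to pick a vertex $v^*$ of minimum out-eccentricity, take its BFS out-tree $T$ (contributing $n-1$ edges), and exploit strong connectivity together with the diameter cap to force at least $q$ additional edges. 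Intuitively, each root-to-leaf branch of $T$ has effective length at most $\lfloor k/2 \rfloor$ (else its tip's return trip to $v^*$ would exceed $k$), so at least $\lceil (n-1)/\lfloor k/2 \rfloor \rceil = q$ branch-closing back-edges must exist beyond the tree. Making this branch/return-edge accounting precise, and handling the non-strongly-connected case so that the same constant $c < \lfloor k/2 \rfloor - 1$ suffices, are the delicate points where I expect to spend the bulk of the effort.
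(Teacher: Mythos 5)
Your welfare computation and your stability analysis essentially reproduce the paper's argument; the only caveat on stability is that Nash stability requires ruling out arbitrary alternate strategies $S'_u$ for the center, not just single-edge deletions. The paper does this cleanly: any best response with $|S'_u|<q$ must disconnect an entire petal, and since each petal contains at least $\lfloor k/2\rfloor-1$ non-center vertices while the edge costs only $c<\lfloor k/2\rfloor-1$, re-adding that petal's edge strictly improves $S'_u$, a contradiction. Your single-edge check can be upgraded to this with one sentence, so I do not count it as a real gap.

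The genuine gap is in the efficiency half, and it is exactly the step you flag as ``the main obstacle.'' The whole claim rests on the structural lower bound that every strongly connected digraph on $n$ vertices with diameter at most $k$ has at least $n-1+\left\lceil (n-1)/\lfloor k/2\rfloor\right\rceil$ edges. The paper does not prove this either: it invokes a known extremal-graph-theory result (Theorems 1 and 2 of the cited reference) asserting that the flower graph attains the minimum number of edges among connected networks on $n$ nodes with diameter $k$. Your proposed derivation does not go through as sketched. The diameter cap bounds $d(t,u)$ for every ordered pair, but it does not bound the depth of a branch of the BFS out-tree from $v^*$ by $\lfloor k/2\rfloor$ --- a branch can have depth up to the out-eccentricity of $v^*$, which may be as large as $k$ --- and nothing in the return-trip heuristic forces one extra non-tree edge per $\lfloor k/2\rfloor$ vertices; establishing that trade-off between branch depths and the number of closing edges is precisely the content of the nontrivial extremal theorem. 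Either cite that result, as the paper does, or supply an actual proof of the bound. The non-strongly-connected case is comparatively routine and your deficit-versus-savings plan is workable (the paper makes each component a balanced flower, notes the reachability deficit is at least $a(n-a)\geq n-1$ while the edge-cost savings are at most about $c+\sqrt n$ under $c<\lfloor k/2\rfloor-1\leq\sqrt n$, and concludes for $n\geq 3$), but as written both delicate steps are deferred, so the efficiency statement is not yet proved.
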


Before we prove the theorem, in order to show pairwise stability, a useful lemma is stated:
\begin{lem}
\label{pairwise_stable_if_stable_and_scc}
If $G$ be stable and strongly connected, then $G$ is pairwise stable.
\end{lem}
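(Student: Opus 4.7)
The plan is to verify each condition of bi-pairwise stability directly, using stability to handle the single-edge conditions and using strong connectedness to make the joint-addition condition vacuous. Recall that the two single-edge-deletion conditions of bi-pairwise stability say $U_{\mathbf S - s_{uv}}(u) \leq U_{\mathbf S}(u)$ and $U_{\mathbf S - \ell_{uv}}(u) \leq U_{\mathbf S}(u)$ for all $u,v$. By the Proposition just after the definition of stability (``stable iff no edge is addable or removable''), no edge in $G$ is removable; this immediately gives both inequalities.

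The interesting condition is the joint-addition one: if $U_{\mathbf S + s_{uv} + \ell_{vu}}(u) > U_{\mathbf S}(u)$, then $U_{\mathbf S + s_{uv} + \ell_{vu}}(v) < U_{\mathbf S}(v)$. I would show the premise is never satisfied. Strong connectedness of $G$ (interpreted so that every vertex $k$-reaches every other via complete-edge paths, which is the content of the statement for the $k$ at hand; in particular when $c_s,c_\ell > 0$ all edges are complete by Lemma~\ref{prop:bi_stable}) yields $R^s_{G,k}(u) = V \setminus \{u\}$, i.e., $u$'s speaking-reachability set is already maximal. Let $G' = G + s_{uv} + \ell_{vu}$. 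Then $|R^s_{G',k}(u)| = |R^s_{G,k}(u)| = n-1$, while $ds^+_{G'}(u) \geq ds^+_G(u)$, with strict inequality if $s_{uv} \notin E_s$. Hence $U^s_{G'}(u) \leq U^s_{G}(u)$. The listening contribution for $u$ is unchanged: $\ell_{vu}$ is an outgoing listening edge of $v$, not $u$, so it affects neither $d\ell^+_G(u)$ nor $R^\ell_{G,k}(u)$ (which uses $u$'s own outgoing listening edges for the first step, and was already maximal anyway). Adding $s_{uv}$ similarly does not increase $u$'s listening reachability. Therefore $U_{G'}(u) \leq U_{G}(u)$, contradicting the strict inequality in the premise; the implication holds vacuously.

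The main subtlety to be careful about is how ``strongly connected'' interacts with $k$: what is really needed is $R^s_{G,k}(u) = V\setminus\{u\}$ for every $u$, i.e., strong connectivity at the diameter scale $k$, which is what holds in the applications (e.g., the flower network of Theorem~\ref{thm:flower} has diameter at most $k$). In the directed setting ($c_\ell = 0$), the paper already observed that pairwise stability coincides with stability (incoming edges never hurt), so the lemma is immediate there; the content of the lemma lies in the bidirected case, where the argument above applies verbatim once Lemma~\ref{prop:bi_stable} is invoked to ensure all edges are complete.
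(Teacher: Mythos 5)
Your proof is correct and follows essentially the same route as the paper's: stability rules out removable edges, and strong connectedness makes the reachability sets maximal so that no pair of edges can be profitably added, rendering the joint-addition condition vacuous. Your added remark about reconciling ``strongly connected'' with the distance bound $k$ (i.e., needing diameter at most $k$) is a fair point that the paper's one-line proof glosses over, but it does not change the argument.
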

\begin{proof}
Since $G$ is stable, it contains no removable edge. As $G$ is strongly connected, for any vertex  $v$, the positive component of $U_G(v)$ is $2(n-1)$ and cannot be increased. Hence, there is no addable pair in $G$. Thus, $G$ is pairwise stable.
\end{proof}
\begin{proof}[of Theorem \ref{thm:flower}]
The social welfare of the balanced flower network is 
{\[ n(n-1)- c\left\lceil{\frac{n-1}{\floor {\frac k 2}}}\right\rceil -c(n-1)\] }
as the utility of the center node is {$(n-1) - cq = (n-1) - c\left\lceil{\frac{n-1}{\floor {\frac k 2}}}\right\rceil$}, and the utility of all other nodes is $(n-1) - c$. {Hence, the social welfare is 
\begin{align*}
  (n-1)\cdot ((n-1) - c) + (n-1) - c\left\lceil{\frac{n-1}{\floor {\frac k 2}}}\right\rceil \\
  = n(n-1)- c\left\lceil{\frac{n-1}{\floor {\frac k 2}}}\right\rceil -c(n-1)
\end{align*}

}

Let us now prove that the flower network $G$ is stable. By Lemma \ref{pairwise_stable_if_stable_and_scc}, this will give pairwise stability. We must show that for all agents $v$, and alternate strategy $s'_v \subseteq V$, we have 
\[ U_{S_v,\mathbf S_{-v}}(v) \geq U_{S'_v,\mathbf S_{-v}}(v).\]

First, consider a vertex $v$ that is not the center; since all vertices are reachable from $v$ by paths of length at most $k$, the positive component of the utility cannot be increased. Hence, utility can only be improved by lowering the cost; however, $|S_v| = 1$, and if $|S'_v| = \emptyset$, then $v's$ utility is 0, which is strictly less than $U_G(v)$. Thus, $v$ does not have an alternate strategy that can improve her utility.

Now consider the center vertex $u$. Again, all vertices are reachable from $u$ by paths of length at most $k$, so the positive component of the utility cannot be increased, and the negative component can only be decreased by decreasing the number of outgoing edges from $u$.
For sake of contradiction, assume there is some $|S'_u| < q$ that is $u$'s best response. Note that $S'_u$ must disconnect $u$ from at least one petal. Since each petal has at least $\floor{\xfrac k 2}-1$ non-center nodes and $c < \floor{\xfrac k 2}-1$, adding the edge from $u$ to the petal to $S'_u$ would increase the utility; this contradicts the fact that $|S'_u| < q$ can be a best response. Hence, the balanced flower network is stable.

Now we prove that the balanced flower network is efficient. First, consider the network that would arise if we simply connected the remainder nodes into a small petal without balancing. This is what is known as simply a \emph{flower graph}. It is known that, for any $n$ and $2 \leq k \leq n$, the flower network attains the fewest number of edges of any connected network on $n$ nodes with diameter $k$ (see Theorems 1 and 2 in \cite{HiWo1979}). Note that our balanced flower network has the same number of edges as the flower graph; hence it also attains the fewest number of edges of any connected network on $n$ nodes with diameter $k$. Therefore, the balanced flower network is optimal amongst the set of strongly connected graphs. 

Hence, it suffices to show that a network that is not strongly connected cannot be efficient. We prove the contrapositive. Assume there is a network that is not strongly connected; we show that we can combine any two maximal strongly connected components into a single strongly connected component that is at least as efficient. Take two  strongly connected of size $a \geq 1$ and $n-a \geq 1$. Clearly, from above, the social welfare of each component can only improve by making it a balanced flower, and this does not affect the social welfare of the remaining graph. Since it is possible that one component is connected to the other (but not vice versa), without loss of generality the social welfare of this network is at most  
\begin{eqnarray*}
  && a(a-1) - c\left\lceil{\frac{a-1}{\floor {\xfrac k 2}}}\right\rceil - c(a-1)   \\
 && + \; (n-a)(n-1) - c\left\lceil{\frac{n-a-1}{\floor {\xfrac k 2}}}\right\rceil - c(n-a-1)  \\
&&\leq n(n-1)-a(n-a) - c\left(\left\lceil{\frac{n-1}{\floor {\xfrac k 2}}}\right\rceil - 2\right)   - cn\\
&&= n(n-1) - c\left\lceil{\frac{n-1}{\floor {\xfrac k 2}}}\right\rceil -c(n-1) + c - a(n-a) \\
&&\leq n(n-1) - c\left\lceil{\frac{n-1}{\floor {\xfrac k 2}}}\right\rceil -c(n-1) + \sqrt n - (n-1),
\end{eqnarray*}
where the last inequality follows as $c < \floor{\xfrac k 2} - 1 \leq \floor{\sqrt n} - 1 \leq \sqrt n$ and $1 \leq a \leq n-1$. For $n \geq 3$, this is less than the social welfare attained by the balanced flower. 
\end{proof}

\begin{rem}
In fact, for any $2 \leq k$, the (unbalanced) flower network which leaves the last petal at it's size without rebalancing is also efficient; the constraint is due to the fact that for $k > 2\sqrt n$ balancing as described above may not be possible. We could, instead, define a recursive balancing process that continues to steal vertices (possibly several from the same petal) until a balanced flower is reached for some $k' < k$; efficiency follows directly, as does stability for a reduced $c$ which is a function of $n$ and $k$.
\end{rem}

While the above network is efficient, they are highly asymmetric with a single node taking on most of the cost. Hence, we now turn our attention towards \emph{symmetric} graphs, and consider a second class of graphs, known as Kautz graphs \cite{BKT1968,II1983} (see Figure~\ref{fig:kautz}).  

The Kautz network $K_d^{D}$ is a directed network with $(d +1)d^{D-1}$ vertices. The vertices are labeled by strings $x_0 \ldots x_{D-1}$ of length $D$ with $x_i \in \{0, \ldots, d\}$  with $x_i \neq x_{i+ 1}$.
The set of edges is defined by
\[ \{(x_0, x_1 \ldots x_D,   x_1 \ldots x_D x_{D + 1}) | \; x_i \in \{0, \ldots, d\} \mbox{ and } x_i \neq x_{i  + 1} \}. \]
Clearly, the network has outdegree $d$, $(d + 1)d^{D}$ edges, and diameter $D+1$.

Kautz graphs arose in the study of the following question: \emph{Given a network with $n$ nodes and $m$ edges, what is the smallest possible diameter $k$?} Through a series of works, it was shown that Kautz graphs are asymptotically optimal with respect to this question (see \cite{MS2005} for a survey). 

In our case, we can rephrase the question as follows: \emph{Given a network with $n$ nodes and diameter $k$, what is the smallest possible number of edges $m$?} Clearly, such a network would maximize social welfare restricted to the set of strongly connected graphs; we can extend this result to all graphs.

\begin{figure}[t]
\begin{center}
\includegraphics[width = 2.5in]{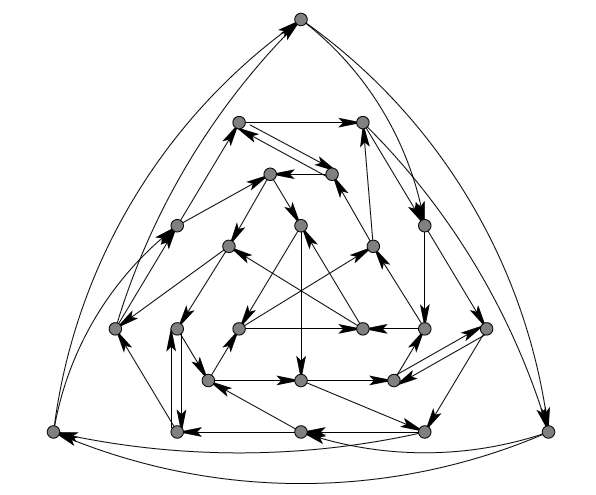}
\caption{Kautz network with $24$ vertices, outdegree $2$ and diameter $4$. This is a symmetric-efficient network for $n = 24$, $k = 4$, and $c < 10$, and is stable for $c < 1$.}
\label{fig:kautz}
\end{center}
\end{figure}

\begin{thm}
\label{thm:kautz}
For any $k \geq 4$, $n \geq 16$, and $c < \xfrac{n}{\log_k(n)}$ the  Kautz network $K^{k-1}_{\log_k(n)}$ (see Figure~\ref{fig:kautz}) is asymptotically symmetric-efficient\footnote{Generally, we assume that $\log_k(n) \in \mathbb Z$, and then can make a statement about asymptotic as $n \to \infty$ for a fixed $k$ on the set of well-defined $n$.} and its social welfare is 
{\[ n ((n-1) - c{\log_k (n)}) . \]}
Moreover, for $c \leq 1$ the Kautz network is pairwise stable.
\end{thm}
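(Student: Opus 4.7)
The plan is to split the argument into the welfare calculation, the asymptotic symmetric-efficiency claim, and pairwise stability. I will use the standard structural properties of the Kautz network $K^{k-1}_{\log_k(n)}$: it is strongly connected and vertex-transitive under label-shifts, has uniform outdegree $\log_k(n)$, and has diameter at most $k$. The welfare formula is then immediate: since every vertex $k$-reaches all $n-1$ other vertices, its utility is exactly $(n-1)-c\log_k(n)$, and summing over $n$ vertices yields the stated value. Vertex transitivity also confirms that the network is symmetric in the sense of our definition, which is what allows it to compete in the symmetric-efficient comparison.

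For asymptotic symmetric-efficiency, I would rewrite $\varphi(G)=\sum_v|R_{G,k}(v)|-c|E(G)|$ for an arbitrary symmetric $G$ and split by the diameter. When $\operatorname{diam}(G)\le k$, the reachability sum saturates at $n(n-1)$, so maximizing welfare reduces to minimizing $|E(G)|$; by the classical extremal bound for directed graphs of bounded diameter (see \cite{MS2005}), the minimum is asymptotically $n\log_k(n)$ and is attained by Kautz. When $\operatorname{diam}(G)>k$, some ordered pair $(u,w)$ has $w\notin R_{G,k}(u)$, and symmetry forces this reachability defect to be globally distributed, pulling $\sum_v|R_{G,k}(v)|$ down by a large amount; the hypothesis $c<n/\log_k(n)$ then implies that this loss dominates any savings from dropping edges, so such $G$ cannot beat the Kautz welfare asymptotically.

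For pairwise stability when $c\le 1$, strong connectivity of Kautz plus Lemma~\ref{pairwise_stable_if_stable_and_scc} reduces the task to showing no edge is removable. Fix an edge $uv$. Using the shift-register combinatorics of Kautz labels, I would argue that every walk from $u$ to $v$ of length at most $k$ is forced to traverse $uv$ itself, because the string labels of the walk's terminal segment must reproduce the label of $v$ shift by shift, and only the edge $uv$ provides the final shift. Consequently, deleting $uv$ excludes $v$ from $R_{G-uv,k}(u)$, yielding $|R_G(u)|-|R_{G-uv}(u)|\ge 1\ge c$, so $uv$ is not removable and $G$ is stable, hence pairwise stable.

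The hard part will be the symmetric-efficiency step, because ``symmetric'' in our model means only that per-vertex utilities are equal, which is strictly weaker than vertex-transitivity, so a Moore-type extremal bound must be applied carefully to ensure it still goes through under this weaker assumption. Quantifying the drop in total reachability in the diameter-exceeds-$k$ regime and matching it against the parameter constraint on $c$ is where most of the technical work will live; by contrast, the welfare calculation and the stability step are largely routine once the structural properties of the Kautz graph are in hand. The remark at the start of the section then transports the whole argument to the bidirected case by replacing each directed edge with a complete bidirected pair.
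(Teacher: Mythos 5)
Your welfare computation and the reduction of pairwise stability to ``no edge is removable'' (via strong connectivity and Lemma~\ref{pairwise_stable_if_stable_and_scc}) match the paper. But the core of your stability argument contains a step that is simply false: you claim that for an edge $uv$ of the Kautz graph, every walk of length at most $k$ from $u$ to $v$ must traverse $uv$ itself, so that deleting $uv$ removes $v$ from $R_{G-uv,k}(u)$. This is not how shift-register graphs behave. Writing $u = x_0\ldots x_{D-1}$ and $v = x_1\ldots x_{D-1}y$, a walk of length $m$ from $u$ lands on the last $D$ symbols of $x_0\ldots x_{D-1}z_1\ldots z_m$; taking $m = D$ with $z_1\ldots z_D = x_1\ldots x_{D-1}y$ (valid whenever $x_1 \neq x_{D-1}$, and with a one-symbol padding one gets a length-$(D+1)\le k$ walk in general since the alphabet has size at least $3$) produces an alternative $u\to v$ walk whose first edge is $u \to x_1\ldots x_{D-1}x_1 \neq v$ and which generically never revisits $u$. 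So $v$ typically \emph{remains} $k$-reachable after the deletion, and your lower bound on the reachability loss evaporates. The paper's argument identifies a different casualty: it picks a vertex $y$ whose label has \emph{no overlap} with $x$'s, so that $y$ is at distance exactly $k$ from $x$ and every length-$k$ path to $y$ must begin with the unique edge $xx'$ appending $y_0$; deleting $xx'$ pushes that far-away $y$ to distance $k+1$. You need an argument of this shape (one witness vertex per outgoing edge of $x$), not a claim about the head of the deleted edge.

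On asymptotic symmetric-efficiency, your decomposition by diameter ($\le k$ versus $>k$) differs from the paper's decomposition by strong connectivity (Kautz minimizes edges among diameter-$\le k$ networks by \cite{MS2005}, then a component-merging computation shows a disconnected network loses at least $a(n-a)\ge n-1$ in reachability, which dominates the at most $O(c\log_k n)$ edge savings). Your diameter-$\le k$ branch is the same citation-based step, but your diameter-$>k$ branch rests on the assertion that ``symmetry forces the reachability defect to be globally distributed,'' which does not follow from the paper's definition of symmetry (equality of \emph{utilities}, so a reachability deficit at one vertex can in principle be offset by a lower out-degree there), and you do not quantify the trade-off between the lost reachability and the saved edges, which is exactly where a Moore-type counting bound is needed. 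You correctly flag this as the hard part, but flagging it is not proving it; the paper's component-splitting inequality is the concrete substitute you are missing.
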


\begin{proof}
The Kautz network is strongly connected and all vertices have the same degree; hence it is symmetric. Each vertex is connected to all others and each has degree $d_n = {\log_k(n)}$, hence each vertex contributes {$(n-1)-c\log_k(n)$} to the social welfare. 

As discussed in the main body of the text, the Kautz network is known to asymptotically have the fewest number of edges necessary in order to have a network of size $n$ with maximum diameter $k$ (see \cite{MS2005} for a survey), and hence it is asymptotically optimal amongst the set of strongly connected graphs. Analogously to the Proof of Theorem~\ref{thm:flower} (see the last paragraph), we can bootstrap this result in order to prove efficiency amongst the set of all graphs by showing that the social welfare of a network that is not strongly connected is (asymptotically) at most the social welfare of the corresponding Kautz network for this range of parameters. 

We prove the contrapositive. Assume there is a network that is not strongly connected; take two strongly connected components of size $a\geq 1$ and $b = n-a$; we show that the Kautz network of size $n$ has at least as much social welfare as this graph. Clearly, from above, the social welfare of each component can only improve by making it a Kautz graph, and this does not affect the social welfare of the remaining graph. Then, (since the two components may be connected) the social welfare is at most 
\begin{eqnarray*}
&& a(a-1) - cad_a + (n-a)(n-1) - c(n-a)d_{n-a} \\
&&= n(n-1) - a(n-a) - c\left(\log_k(a^a{(n-a)}^{n-a})\right) \\
&&\leq n(n-1) - (n-1) - c\left(\log_k({(n-1)}^{n-1})\right) \\
&&\leq n(n-1) - c\log_k({n}^{n})  + c\log_k\left(\frac{{n}^{n}}{{(n-1)}^{n-1}}\right) - (n-1) \\
&&\leq n(n-1) - c\log_k({n}^{n}) 
\end{eqnarray*}
where the last inequality follows for any constant $c$ and $k$ and large enough $n$. This is the social welfare of the Kautz graph; hence the Kautz network is asymptotically symmetric-efficient.

We now prove stability. By Lemma \ref{pairwise_stable_if_stable_and_scc}, this also gives pairwise stability. 
Clearly, no strategy $S_x'$ such that $|S_x'| \geq |S_x|$ can improve the utility of $x$. Hence we first consider strategies $S_x' \subseteq S_x$. Let $x = x_0, \ldots, x_D$ and $y = y_0, \ldots, y_D$ be two nodes such that there is there is no $i$ such that $x_i = y_{D-i-1}, \ldots, x_D = y_0, \ldots, y_{i}$. Hence, the distance from $x$ to $y$ must be exactly $k$ in the Kautz graph. Consider any edge $xz$ such that $z_0 \neq y_0$. It must again hold that there is no $i$ such that $z_i = y_{D-i-1}, \ldots, z_D = y_0, \ldots, y_i$; otherwise there would have been some such $i$ for $x$. Hence, the distance from $x$ to $y$ is $k+1$ if we remove  edge $xx'$ where $x' = x_1,\ldots,x_D,y_0$. Hence, the strategy $S_x \setminus xx'$ does not improve $x$'s utility since the cost reduces by at most $1$ while at least one node becomes unreachable. 
Note that, for every allowable $y_0$, some such vertex $y$ exists. Hence, no strategy $S_x' \subseteq S_x$ improves $x$'s utility. Lastly, for sake of contradiction, assume there is an optimal strategy $S_x'$ such that $|S_x'| < |S_x|$. Then, there must be some allowable $y_0$ to which $x$ does not build an edge. However, as we just observed, $S_x' + xy_0$ improves the utility of $x$. This contradicts the optimality of $S_x'$. Hence, this concludes the proof.
\end{proof}

\begin{rem}
We leave open the question of whether Kautz graphs are stable for any $c > 1$, and more generally, the question of characterizing  optimal symmetric stable graphs. 
\end{rem}
\vspace{-.1in}

\label{sec:extremal}
We can further observe some properties of the extremal ranges of the parameters where the stable and efficient network structures are not as interesting. These results also lead to the following observation.

\begin{cor}
The price of anarchy is 0; the price of stability is $1$.
\end{cor}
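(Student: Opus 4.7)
The plan is to prove the two halves separately, each as an almost immediate corollary of what we have already proved.

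For the price of stability, I would simply invoke Theorem~\ref{thm:flower}: for $4 \leq k \leq 2\sqrt n$ and $1 \leq c < \floor{k/2}-1$, the balanced flower network is simultaneously efficient and (pairwise) stable. In particular, there exists a stable strategy vector whose welfare attains the optimum $\varphi_{\text{OPT}}$, and therefore the best stable welfare equals $\varphi_{\text{OPT}}$, giving $\text{PoS}=1$. Nothing new is required here.

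For the price of anarchy, I would exhibit a stable network of zero welfare in the same parameter range. The natural candidate is the empty network $G_0$. It has welfare $\varphi(G_0)=0$ trivially since every agent has empty reachable set and pays nothing. To see $G_0$ is stable when $c_s, c_\ell \geq 1$, observe that from $G_0$ the only potential deviation is to add edges; adding a single speaking edge $s_{uv}$ yields $u$ a reachable set of size exactly $1$ at cost $c_s$, for a net change of $1 - c_s \leq 0$ (and analogously for listening edges). Hence no edge is addable, and by the stability characterization (equivalent to the Nash condition via the proposition preceding Lemma~\ref{pairwise_stable_if_stable}) the empty network is stable. Combining this with Theorem~\ref{thm:flower}, which in the same regime gives $\varphi_{\text{OPT}} \geq n(n-1) - c\lceil (n-1)/\floor{k/2}\rceil - c(n-1) > 0$ for $n$ sufficiently large, yields $\text{PoA} \leq \varphi(G_0)/\varphi_{\text{OPT}} = 0$.

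There is essentially no obstacle in either direction; the only minor care needed is to verify that the two stable networks of extremal welfare (empty and balanced flower) can coexist under the same choice of $c$ and $k$, which they do for any $c,k$ in the intersection of the two regimes above. The entire content of the corollary is really the juxtaposition of the previously-established facts that (i) an efficient stable network exists for nontrivial parameter ranges (flower), and (ii) the empty graph is trivially stable once edges are mildly costly, giving the maximal possible gap between best and worst equilibrium welfare.
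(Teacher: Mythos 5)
Your proposal is correct, but it takes a genuinely different route from the paper. The paper obtains the corollary from its appendix propositions about the \emph{cycle} at $k=\infty$: Proposition~\ref{prop:empty} (the empty network is stable once $c_s\geq 1$ or $c_\ell\geq 1$) together with Proposition~\ref{efficient cycle} (the cycle is the unique efficient network, with positive welfare) gives price of anarchy $0$, and Proposition~\ref{stable cycle} together with Proposition~\ref{efficient cycle} gives price of stability $1$. You instead use Theorem~\ref{thm:flower} as the simultaneous witness of efficiency and stability, and re-derive the stability of the empty network by hand. Both arguments are sound, but they certify the corollary over different parameter regimes: the paper's route covers $k=\infty$ with $1\leq c_s,c_\ell\leq n-1$, while yours covers finite $k$ with $6\leq k\leq 2\sqrt n$ and $1\leq c<\lfloor k/2\rfloor-1$ (note that for $k\in\{4,5\}$ the stability hypothesis of Theorem~\ref{thm:flower} is vacuous, so the intersection of your two regimes forces $k\geq 6$; worth stating explicitly). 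Two small points: first, in the bidirected model a lone speaking edge $s_{uv}$ without $\ell_{vu}$ yields a reachable set of size $0$, not $1$, so your net change is $-c_s$ rather than $1-c_s$ --- this only strengthens the non-addability conclusion, and your $1-c_s\leq 0$ computation is the correct one for the directed case $c_\ell=0$; second, since stability is a single-agent deviation notion you should observe (as you implicitly do) that a deviating agent can add \emph{several} edges at once from the empty network, gaining at most $|T|(1-c)\leq 0$, which is still covered by $c\geq 1$. With those caveats your argument stands and is arguably a nice complement to the paper's, since it shows the extremal equilibrium gap persists for finite $k$.
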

\vspace{-.1in}
This follows from the definition of the price of anarchy and price of stability by combining Propositions \ref{prop:empty} and~\ref{efficient cycle} for the former and Propositions~\ref{stable cycle} and \ref{efficient cycle} for the latter.

\section{Discussion \& Future Work}
\label{sec:conclusion}

For the directed setting $(c_\ell = 0)$, we show that for $k = \infty$, asynchronous stochastic edge dynamics converge to a stable network; moreover these fixed points can have non-trivial network structure. Our proof does not generalize to the case of $k < \infty$, and we leave open the question of whether these dynamics converge in that setting. Proving bounds on the time to convergence and understanding the regions of attraction would also be of interest as they inform the distribution of networks we would expect to see from a generative version of this model.

For the bidirected setting, we show that for arbitrary $k$ the edge dynamics exhibit fast converge. 
However, developing an understanding of \emph{best-response dynamics} in which, in each time step, a vertex $v$ updates her strategy $S_v$ in order to maximize her utility (potentially by changing multiple edges simultaneously) with respect to the current strategies $\mathbf S_{-v}$ of the other agents, remains open as an interesting line for future work. 
Our results do not directly generalize as there is no reason why a best response would be limited to only changing addable and removable edges (given two addable edges, it is not necessarily optimal to add both).   For some special cases of our model, if the dynamics are performed \emph{synchronously}, the results from \cite{BaGo2000} apply. 
Pushing this line of work further could be of interest.

With respect to the static game, we give classes of efficient networks (for $k \lesssim \sqrt n$) and symmetric-efficient networks (for $4 \leq k$). While the former are stable for any $c_s, c_\ell \lesssim \xfrac k 2$, the latter are only stable for $c_s, c_\ell \leq 1$; for the latter we rely on a long line of work from combinatorics, and determining whether any symmetric stable network exists appears to be a deep and technically challenging open problem.

Lastly, an important direction for future work would be to evaluate real-world networks and see if and when this model can explain the structure. In particular, understanding the ranges of parameters observed in real networks may lead to interesting insights about the network participants and the forces that drive them.

\section{Other Related Work}

Due to the vast range of applications; from sociology to commerce, biology and physics, with drastically different underlying properties, many models have been developed and studied in depth (see  \cite{Newm2003} for a survey).
Starting with $G(n,p)$ \cite{ErRe1960,Boll2001}, stochastic models have often taken a forefront. 
Depending on the observed network properties, different models take the forefront, such as preferential attachment models \cite{BA1999} for specific degree distributions, or small-worlds models \cite{WaSt1998} for capturing social networks.  An alternate approach, is to take an existing network and fit a model using techniques from machine learning. For example, a the authors of \cite{KJJFYY2008} attempt to understand the Twitter network by fitting a stochastic model.
However, while stochastic  and learned models can explain on a macro level what is occurring in a network, on a micro level, i.e., looking at individual nodes and its edges, they remain uninformative; the motivation as to why a node would maintain an edge is abstracted away. We instead consider game theoretic models of a network in which each node is a selfish agent and decided if and whom to connect to based on her utility.

There has been a lot of very interesting work on network formation (see \cite{Jack2005} and {\cite{AGTbook}} Chapter 19 for nice surveys coming out of the Economics and Algorithmic Game Theory literature respectively). Myerson was the first to consider such models (see, e.g., \cite{Myer1977,AuMy1988}). However, formulated the problem as a \emph{cooperative} game where agents worked together towards a common goal; in our setting we assume agents have individual, or \emph{selfish}, goals. In another line of work, global connection games (see \cite{AGTbook}) are studied in which agents are not nodes in the network, rather vested parties in (individual) global connectivity properties of the game. 

More closely related models consider selfish agents that are nodes in \emph{undirected} networks. 
\cite{JaWo1996} introduced a model for the study of the (static) stability of undirected networks. Also known as the \emph{local connection game} (see \cite{AGTbook}), nodes have discounted (based on path length) rewards for being connected to another agent, and cost for making a link. Their goal was to understand the relationship between stability and efficiency, which led to further results in this direction (see, e.g., \cite{DuMu1997,Jack2001}). 
The authors of \cite{Watt1999} consider edge dynamics for this undirected model. This is further studied by \cite{JaWa2002} where it is shown that the stochastic best response dynamics may not converge; this is in contrast to our model which will always converge to a stable network.

 Directed networks allow one individual to connect
to another without the consent of the second individual, and thus applications are to settings such as Twitter \emph{following}, while undirected network capture social networks where links are reciprocal, such as Facebook \emph{friendship}. 

The difference between directed and undirected graphs is not just a technicality when it comes to modeling.  
In undirected networks, edges are implicitly \emph{reciprocal}, hence consent is required from both endpoints; thus, undirected models are suitable for many forms of economic and social relationships. For directed networks, however, a vertex can link directly to another without reciprocally; thus, directed models are more suitable for capturing interactions that are passive in one direction, as with the consumption of public content. 

The closest related work to our setting, by Bala and Goyal \cite{BaGo2000},  studies the stability, efficiency, and dynamics of \emph{directed} networks. In fact, their can be viewed as a special case of ours when $c_\ell = 0$ and $k = \infty$. However, their dynamics differ significantly from ours; they use lazy simultaneous best-response dynamics while we consider  asynchronous stochastic edge dynamics. Due to the nature of their update, their process always converges to either a cycle or the empty network. Not only are our dynamics more natural because coordination is not required and connections are evaluated on an individual basis, but the class of networks that are fixed points of our dynamics is nontrivial.

\bibliographystyle{ACM-Reference-Format-Journals}
\bibliography{socialmedia}

\appendix
\vspace{0.1in}
\section{Convergence for directed case}
\vspace{.1in}
\begin{lem}
\label{lem:rem8}
After step $8$, we removed at least one edge from being addable at any point in the future. Moreover, the number of large components does not increase. 
\end{lem}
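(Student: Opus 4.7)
Let $e = r_k s_k$ denote the edge added in step~8. I argue that $e$ witnesses both claims.

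Since step~7 is complete, every small component reaches $T_1$ and $T_1$ is the unique large component. Consequently $T_1$ is a sink in the component DAG of the pre-step-8 graph $G_0$: if $T_1$ had an outgoing edge to some small $C$, then $T_1$ and $C$ would be mutually reachable and hence a single SCC, contradicting their distinctness. Thus $R_{G_0}(r_k) = T_1$, and $R_{G_0+e}(r_k) = T_1 \cup R_{G_0}(S_k)$, for a reachability gain of $|R_{G_0}(S_k) \setminus T_1| > c$ by the step~8 selection criterion; so $e$ is addable at $G_0$.

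I next show that $e$ is neither re-added nor removed in the future, hence remains in the graph forever and is therefore non-addable in every subsequent state. The ``never re-added'' part is a template check on steps~5--8: once $e$ is inserted, $r_k$ and $s_k$ both lie in the unique large SCC $T_1'$, whereas each of steps~5--8 places an edge with at least one endpoint outside the unique large component. For ``never removed'', the invariant I would maintain is that in every reachable future state containing $e$, deleting $e$ costs $r_k$ at least $|R_{G_0}(S_k) \setminus T_1| > c$ reachable nodes, making $e$ non-removable. This rests on two sub-invariants: (i) after each stable state (post a step~1 execution), $T_1'$ is a sink in the DAG of un-absorbed components, since any outgoing edge from $T_1'$ added by step~8 immediately triggers absorption of its target; and (ii) a future step~7 edge $s_j r_1'$ with $r_1' \in S_k$ cannot open an alternative $r_k$-to-$s_k$ route around $e$, because at that moment $S_j$ is a small leaf and $T_1'$ cannot reach $S_j$ (otherwise $S_j$ would already have been absorbed). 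Together, $e$ was addable at $G_0$, enters the graph in step~8, and stays there forever, establishing the first claim.

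For the second claim, the cycle $r_k \to s_k \to \cdots \to t_k \to r_k$ created by $e$ merges $T_1$ with the small components on the chosen path into a single SCC $T_1'$, leaving exactly one large component immediately after the addition. In the subsequent step~1, edge deletions can only split SCCs; any split of $T_1'$ preserves the former $T_1$ as an SCC, because intra-$T_1$ edges remain non-removable by the same reasoning as in Lemma~\ref{lem:rem7} (the new edge $e$ introduces no intra-$T_1$ shortcut, only an external bridge through $s_k$), while the other sub-SCCs consist of subsets of the absorbed small components and thus are small. Hence at most one large component survives.

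The main obstacle is the invariant~(ii) above: tracking, across all iterations between future executions of steps~7 and~8, that no step~7 addition ever creates a path into $S_k$ from a vertex reachable from $T_1'$. I expect this to require an inductive argument exploiting the absorption trigger in step~8 together with the definition of leafness used in step~7. The companion claim that intra-$T_1$ edges remain non-removable after the insertion of $e$ is analogously delicate and constitutes the other technical core of the proof.
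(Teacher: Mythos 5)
Your proposal rests on the claim that the edge $e = r_k s_k$, once inserted, is \emph{never removed} in any future state, so that non-addability becomes vacuous. That claim is precisely the hard content, and you do not establish it: you reduce it to two ``sub-invariants'' and then state in your closing paragraph that both the invariant~(ii) and the companion claim that intra-$T_1$ edges remain non-removable are open technical cores. So the proof is incomplete exactly where it matters. Worse, the claim you are trying to prove is stronger than what is true or needed. The paper's proof explicitly allows $r_k s_k$ to be deleted in the step~1 that follows step~8; what it shows instead is (a)~that the large component remains reachable from $s_k$ throughout that deletion phase --- proved by taking the \emph{first} removable edge $xy$ whose deletion would cut $s_k$ off from $T_1'$, noting that $xy$ then lies on every path from $x$ to $T_1$, so deleting it costs $x$ at least $|T_1| > c$ vertices and $xy$ could not have been removable; and (b)~that consequently either $s_k$ has been absorbed into the large component (so $r_k s_k$ is intra-component and never addable by Lemma~\ref{lem:addinsamecomp}), or $r_k s_k$ was removed and then $|R_{G''}(s_k) \setminus T_1''| < c$, a quantity that does not grow in later iterations, so the edge never becomes addable again. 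Neither the preserved-reachability argument~(a) nor the case analysis~(b) appears in your sketch. I would also caution that your sub-claim about intra-$T_1$ edges is doubtful: adding $r_k s_k$ creates detours of the form $x \to \cdots \to r_k \to s_k \to \cdots \to r'$ that re-enter $T_1$ at a different vertex $r'$, and such a detour can strip an internal edge of $T_1$ of its exclusive coverage and render it removable; $e$ is not merely ``an external bridge.''

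For the second claim you also over-engineer. The number of large components cannot increase simply because adding an edge can only merge strongly connected components, and deleting removable edges cannot increase the number of large components by Lemma~\ref{lem:invlargecomp} (this is how the paper argues, via the proof of Lemma~\ref{lem:rem5}). Your route through ``any split of $T_1'$ preserves the former $T_1$ as an SCC'' both depends on the unproven non-removability claim above and is unnecessary. Your opening observation that $T_1$ is a sink in the component DAG before step~8, hence $R_{G_0}(r_k) = T_1$ and $e$ is addable, is correct and a nice touch, but it addresses the easy part of the lemma.
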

\begin{proof}
By the same argument as in the proof of Lemma~\ref{lem:rem5}, we know that step $8+1$ cannot increase the number of large components. 

We prove that the edge $r_ks_k$ cannot be added again in the future.
Let $T_1$ be the largest component in $G$ (before step 8) and let $T_1''$ be the largest component in $G''$ (after step 8+1). We reserve $T_1'$ and $G'$ for an intermediary graphs. 

It now suffices to argue that $T_1'' \in R_{G''}(s_k)$; hence either $s_k \in T_1''$, or $r_ks_k$ was removed in step 1 (after we added it in step 8), in which case $|R_{G''}(s_k) \setminus T_1''| < c$. Since steps 7 and 8 cannot increase $R_{G''}(s_k)$ for any future $G''$, the edge $r_ks_k$ will never become addable again. 

For sake of contradiction, assume $T_1'' \not\in R_{G''}(s_k)$. Let $xy$ be the first removable edge such that  $T_1' \not\subseteq R_{G'-xy}(s_k)$; hence, $xy$ must have been on every path from $s_k$ to $T_1'$, and in particular $x \in R_{G'}(s_k)$ and $T_1 \not\subseteq R_{G'-xy}(x)$. However, $|T_1| > c$, hence $xy$ is not removable which gives a contradiction.  
\end{proof}

\begin{lem}
\label{lem:addinsamecomp}
If $vw$ is an addable edge in $G$, then $v$ and $w$ belong to different strongly connected components of $G$.
\end{lem}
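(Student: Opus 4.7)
\medskip

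\noindent\textbf{Proof plan for Lemma~\ref{lem:addinsamecomp}.} The plan is to argue the contrapositive: if $v$ and $w$ lie in the same strongly connected component of $G$, then $vw$ cannot be addable. This follows almost directly from the definition of addability together with the monotonicity of reachability under edge addition, so I do not expect a significant obstacle.

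\smallskip

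\noindent First, suppose $v$ and $w$ are in the same strongly connected component $C$ of $G$. Since $C$ is strongly connected, there is a directed path from $v$ to $w$ in $G$; thus $w \in R_G(v)$. Moreover, any $z \in R_G(w)$ is reachable from $v$ by concatenating a $v$-to-$w$ path with a $w$-to-$z$ path, so $R_G(w) \subseteq R_G(v)$. Hence $\{w\} \cup R_G(w) \subseteq R_G(v)$.

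\smallskip

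\noindent Next, I would observe that adding the edge $vw$ can only enlarge $R_G(v)$ by vertices reachable through the new edge, i.e., by vertices in $\{w\} \cup R_G(w)$. Combined with the containment above, this yields
\[ R_{G+vw}(v) = R_G(v) \cup \{w\} \cup R_G(w) = R_G(v), \]
so $|R_{G+vw}(v)| = |R_G(v)|$. Meanwhile, the out-degree strictly increases: $d^+_{G+vw}(v) = d^+_G(v) + 1$. Therefore
\[ U_{G+vw}(v) = |R_{G+vw}(v)| - c\cdot d^+_{G+vw}(v) = U_G(v) - c \leq U_G(v), \]
which contradicts the strict inequality $U_{G+vw}(v) > U_G(v)$ required for $vw$ to be addable (and gives a strict contradiction when $c>0$; when $c=0$ equality already suffices to violate the strict addability condition).

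\smallskip

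\noindent Thus no edge between two vertices in the same strongly connected component can be addable, which is exactly the contrapositive of the statement. The entire argument is essentially a one-line observation once reachability under edge addition is written out, so no step is technically delicate.
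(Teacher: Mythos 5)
Your proof is correct and follows essentially the same route as the paper's: assume $v$ and $w$ share a strongly connected component, observe that $R_{G+vw}(v)=R_G(v)$ while the out-degree (and hence the cost) increases, and conclude $vw$ cannot be addable. Your version is slightly more careful in spelling out the reachability containment and the $c=0$ edge case, but there is no substantive difference.
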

\begin{proof}
An edge $vw$ is addable if $U_{G+vw}(v) > U_G(v)$, in other words $|R_{G+vw}(v) \setminus R_G(v)| \geq c$. Assume that $v$ and $w$ belong to same strongly connected component. Thus $R_G(v) = R_G(w)$, and hence $R_{G+vw}(v) = R_G(v)$. Therefore the positive component of the utility for $v$ does not increase, while the negative component would decrease. Hence no such $vw$ edge is addable. 
\end{proof}

\begin{lem}
\label{lem:dag}
The component-level network is a directed acyclic graph. 
\end{lem}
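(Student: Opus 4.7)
The plan is to prove this standard fact by contradiction, exploiting the maximality built into the definition of strongly connected components. Suppose, for contradiction, that the component-level network contains a directed cycle passing through distinct maximal strongly connected components $C_1, C_2, \ldots, C_m, C_1$ (with $m \geq 2$, since self-loops in the condensation arise from edges within a single component and are not counted as cycles in the component-level graph).

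By the definition of the component-level graph, for each $i \in \{1, \ldots, m\}$ (indices mod $m$) there exists an edge $u_i v_i \in E$ with $u_i \in C_i$ and $v_i \in C_{i+1}$. First I would argue that this yields a directed path in $G$ from any vertex $x \in C_j$ to any vertex $y \in C_{j'}$ for arbitrary $j, j'$: using strong connectivity of $C_j$, one reaches $u_j$; then crosses the edge $u_j v_j$ into $C_{j+1}$; then uses strong connectivity of $C_{j+1}$ to reach $u_{j+1}$; and so on around the cycle until landing in $C_{j'}$ and routing to $y$.

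Applying this argument in both directions shows that every pair of vertices in $\bigcup_{i=1}^m C_i$ is mutually reachable. Hence $\bigcup_{i=1}^m C_i$ lies entirely within a single strongly connected set, contradicting the maximality of each $C_i$ as a strongly connected component (since $C_1$ and $C_2$ are distinct maximal SCCs yet would be merged). This contradiction establishes that the component-level network cannot contain a directed cycle, and is therefore a DAG.

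I do not expect any significant obstacle here, as this is essentially the textbook fact that the condensation of a directed graph is acyclic; the only care needed is to phrase ``cycle in the component-level graph'' in a way that excludes the trivial self-loop and to invoke maximality of the SCCs at the right step.
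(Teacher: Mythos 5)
Your proof is correct and follows essentially the same argument as the paper: assume a cycle among two or more components, show all vertices in the union are mutually reachable via within-component paths and the cross-component cycle edges, and derive a contradiction with the maximality of the strongly connected components. Your write-up is somewhat more careful about indexing and about excluding trivial self-loops, but the underlying idea is identical.
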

\begin{proof}
Assume for sake of contradiction, that some set of two or more components form a cycle in the component-level graph. Then, for every $u, v$ in the cycle, there must be some path from $u$ to $v$ as there is a path between any two vertices within every component, and a path between any two components via the cycle. Hence, the cycle in fact forms a strongly connected component contradicting the maximality of the partition. 
\end{proof}

\begin{lem}
\label{lem:notremreachlarge}
If $uv$ is not removable in $G$, then $|R_G(v)| \geq c$. 
\end{lem}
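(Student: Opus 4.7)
The plan is to unpack the definition of ``not removable'' into an inequality on reachability sizes, and then exhibit an injection (really an inclusion) of the ``newly unreachable'' vertices into $R_G(v)$.

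First I would translate the hypothesis. In the directed setting ($c_\ell=0$, $k=\infty$) used throughout this section, $U_G(u) = |R_G(u)| - c\cdot d^+_G(u)$. Saying $uv$ is not removable means $U_{G-uv}(u) \leq U_G(u)$, i.e.
\[
\bigl(|R_{G-uv}(u)| - c(d^+_G(u)-1)\bigr) \;\leq\; |R_G(u)| - c\,d^+_G(u),
\]
which rearranges to $|R_G(u)| - |R_{G-uv}(u)| \geq c$.

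Next I would analyze the set $X := R_G(u) \setminus R_{G-uv}(u)$ of vertices that become unreachable from $u$ once $uv$ is deleted. For any $w \in X$, every directed $u$-to-$w$ path in $G$ must traverse the edge $uv$ (otherwise the same path would witness $w \in R_{G-uv}(u)$). Fix one such path and consider its suffix starting from $v$: this is a directed $v$-to-$w$ path in $G$, hence $w \in R_G(v)$. Therefore $X \subseteq R_G(v)$, and combining with the previous step,
\[
|R_G(v)| \;\geq\; |X| \;=\; |R_G(u)| - |R_{G-uv}(u)| \;\geq\; c,
\]
as required.

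There is no real obstacle here; the argument is essentially a one-line accounting calculation once the two observations (the inequality from non-removability, and the inclusion $X \subseteq R_G(v)$) are in place. The only thing to be mindful of is the use of $k=\infty$: the ``suffix of a path'' argument relies on the fact that truncating a reachability path of arbitrary length still certifies reachability, which is automatic without a distance cap. For finite $k$ this step would require more care, but as the dynamics analysis in this section is already restricted to $k=\infty$, the proof goes through directly.
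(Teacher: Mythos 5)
Your proof is correct and takes essentially the same approach as the paper, which states the identical observation in contrapositive form: if $|R_G(v)| < c$ then deleting $uv$ loses fewer than $c$ reachable vertices (exactly your inclusion $X \subseteq R_G(v)$) while saving cost $c$, so $uv$ is removable. The one subtlety, shared with the paper's one-line proof, is that $v$ itself may lie in $X$ yet not in $R_G(v)$ under the paper's convention that $|R_G(v)| \leq n-1$, so strictly one only gets $|R_G(v)| \geq c-1$ at the integer boundary; this looseness is the paper's, not yours.
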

If $|R_G(v)| < c$, then by removing $uv$, the positive component of the utility decreases by less than $c$ while the negative component decreases by $c$. Hence, $uv$ is removable.

\begin{lem}
\label{lem:noremleafsize}
If $G$ has no removable edge then any leaf component must either be an isolated vertex or large. 
\end{lem}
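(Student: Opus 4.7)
The plan is to argue by cases on whether the leaf component $L$ contains any internal edge (equivalently, whether $|L| \geq 2$), and then invoke Lemma~\ref{lem:notremreachlarge} together with the observation that $R_G(v) \subseteq L$ for every $v \in L$, since $L$ being a leaf in the component DAG means no edges leave $L$.

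\textbf{Step 1 (the easy case, $|L| \geq 2$).} Because $L$ is a strongly connected component of size at least $2$, it contains at least one internal directed edge $uv$ with $u, v \in L$. Since $G$ has no removable edge, Lemma~\ref{lem:notremreachlarge} applies to $uv$ and gives $|R_G(v)| \geq c$. On the other hand, $L$ is a leaf component in the component DAG (Lemma~\ref{lem:dag}), so any path leaving $v$ remains inside $L$; hence $R_G(v) \subseteq L$ and therefore $|L| \geq c$, i.e., $L$ is large.

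\textbf{Step 2 (the singleton case, $|L| = 1$).} Write $L = \{v\}$ and suppose $L$ is not isolated, so $v$ has at least one incoming edge $uv$ from some other component (it cannot have outgoing edges as $L$ is a leaf, and it has no self-loops). Applying Lemma~\ref{lem:notremreachlarge} to the non-removable edge $uv$ gives $|R_G(v)| \geq c$. But $v$ has no outgoing edges at all, so $R_G(v) \subseteq \{v\}$, forcing $c \leq 1$. In that regime every nonempty component has size $\geq 1 \geq c$, so in particular $L$ itself is large by definition. Combining Steps 1 and 2 covers all non-isolated leaf components.

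\textbf{Main obstacle.} The only subtle point is the singleton-leaf case, where ``large'' degenerates unless one tracks the exact relationship between $c$ and the reachability bound supplied by Lemma~\ref{lem:notremreachlarge}. The trick is to observe that the hypothesis ``no removable edge'' forces $|R_G(v)| \geq c$ even for an incoming edge from outside $L$, and this single inequality simultaneously handles the $|L| \geq 2$ case (by producing a lower bound on $|L|$) and rules out a non-isolated singleton leaf whenever $c > 1$. No further structural information about $G$ beyond Lemma~\ref{lem:notremreachlarge} and Lemma~\ref{lem:dag} should be required.
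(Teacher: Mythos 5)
Your proof is correct and follows essentially the same route as the paper's: apply Lemma~\ref{lem:notremreachlarge} to an edge terminating inside the leaf component and use the fact that reachability from a leaf component stays inside it. Your explicit split into the $|L|\geq 2$ and singleton cases (with the observation that a non-isolated singleton leaf forces $c\leq 1$, making it large by definition) is just a more careful writeup of the argument the paper gives in one stroke.
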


\begin{proof}
Since $G$ has no removable edge then by lemma \ref{lem:notremreachlarge} any edge $uv$ must have $|R_G(v)| \geq c$. Consider a leaf component $C$, and consider any edge $uv$ such that $v \in C$. If some such edge exists, then $R_G(v) = R_G(C)$, and since $C$ is a leaf $|R_G(C)|=|C|$. hence, $|C|>c$. If no such edge exists, then $C$ must be an isolated vertex. 
\end{proof}

\begin{lem}
\label{lem:onebig}
If $G$ has no removable edge but an addable edge, then there exists a large component in $G$.
\end{lem}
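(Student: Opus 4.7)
My plan is to combine the DAG structure of the component graph (Lemma~\ref{lem:dag}) with the characterization of leaf components from Lemma~\ref{lem:noremleafsize}. The idea is that addability gives us a vertex whose reachability set is quantitatively large, and following reachability down to a leaf of the component DAG must then uncover a large component.

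First I would unpack the addability assumption. Let $uv$ be an addable edge in $G$. Since $R_{G+uv}(u) = R_G(u) \cup R_G(v)$ and $ds^+_{G+uv}(u) = ds^+_G(u) + 1$, the inequality $U_{G+uv}(u) > U_G(u)$ reduces to $|R_G(v) \setminus R_G(u)| > c$. In particular $R_G(v)$ is non-empty and satisfies $|R_G(v)| > c$.

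Next I would extract a leaf of the component DAG that is reachable from $v$. Let $C_v$ be the strongly connected component containing $v$. By Lemma~\ref{lem:dag} the component graph is a DAG, so starting at $C_v$ and repeatedly following outgoing edges terminates at some leaf component $L \subseteq R_G(v)$. By Lemma~\ref{lem:noremleafsize}, since $G$ has no removable edges, $L$ is either large or an isolated vertex. In the large case we are immediately done. Otherwise $L$ is a singleton with no incident edges; an isolated component has no incoming edges, so it cannot be reached from any other component, forcing $L = C_v = \{v\}$ and hence $R_G(v) = \{v\}$. But then $1 = |R_G(v)| > c$, so $c < 1$, and the singleton $L$ already satisfies $|L| = 1 \geq c$, i.e., $L$ is large after all.

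The only delicate point is the degenerate sub-case in which $v$'s own component is an isolated singleton; the quantitative bound $|R_G(v)| > c$ from addability is exactly what forces $c < 1$ in that situation, so that the singleton automatically qualifies as large and the dichotomy of Lemma~\ref{lem:noremleafsize} closes the argument. Everything else is straightforward traversal of the component DAG.
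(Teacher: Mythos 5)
Your proof is correct and follows essentially the same route as the paper's: both reduce addability to the statement that $v$'s reachability set is large, walk the component DAG (Lemma~\ref{lem:dag}) from $v$'s component down to a leaf, and invoke Lemma~\ref{lem:noremleafsize}. The only difference is bookkeeping for the degenerate leaf: the paper assumes $v$'s component is small so that the leaf lies strictly downstream and therefore has an incoming edge (hence is not isolated), whereas you handle the isolated singleton directly by noting the addability bound forces $c < 1$; both resolutions are valid.
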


\begin{proof}
There exists some edge $uv$ in $G$ that is addable. From Lemma~\ref{lem:addinsamecomp}, we know that $u$ and $v$ must be in different components $C_1$, $C_2$. Since $uv$ is addable $|R_G(C_2) \setminus R_G(C_1)| \geq c$. Toward contradiction assume that $|C_2| < c$. Thus, there must be at least one other component that is reachable from $C_2$. Thus, the component level graph has at least one leaf, and, by lemma \ref{lem:noremleafsize}, that leaf must be of size at least $c$. Thus, there exists at least one component of s	ize at least $c$.
\end{proof}

\begin{lem}
\label{lem:addtoroot}
For any large component $C$ and $u,v \in V$, such that $v \in C$ and $C \not \subseteq R_G(u)$, edge $uv$ is addable.  
\end{lem}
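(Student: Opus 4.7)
The plan is to unfold the definition of ``addable'' as reformulated in the proof of Lemma~\ref{lem:addinsamecomp}: the edge $uv$ is addable at $u$ exactly when $|R_{G+uv}(u) \setminus R_G(u)| \geq c$. So the goal is to show that inserting $uv$ makes all of $C$ newly reachable from $u$, which, because $C$ is large, yields at least $c$ newly reached vertices.

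First, I would show that $C \subseteq R_{G+uv}(u)$. In $G+uv$, vertex $u$ reaches $v$ in one step along the new edge, and since $C$ is a maximal strongly connected component containing $v$, every $w \in C$ is reachable from $v$ by a directed path lying entirely inside $C$. Concatenating each such internal path with the edge $uv$ gives a directed $u$-to-$w$ walk for every $w \in C$, so $C \subseteq R_{G+uv}(u)$.

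Next, I would show that $C \cap R_G(u) = \emptyset$ in the original graph. This is exactly where the hypothesis $C \not\subseteq R_G(u)$ is leveraged together with the strong connectivity of $C$: if some $w \in C$ were in $R_G(u)$, then $u$ could reach $w$ in $G$ and, continuing through the internal paths of $C$, reach every other vertex of $C$, contradicting $C \not\subseteq R_G(u)$. Combining the two observations gives $R_{G+uv}(u) \setminus R_G(u) \supseteq C$, and hence $|R_{G+uv}(u) \setminus R_G(u)| \geq |C| \geq c$ by largeness of $C$, which certifies that $uv$ is addable.

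This lemma is essentially bookkeeping; the only non-routine content is the clean two-sided use of strong connectivity (once to propagate reachability forward from $v$ through all of $C$, once to rule out any foothold of $R_G(u)$ inside $C$). I do not anticipate a technical obstacle beyond making sure the walks remain valid directed paths through $C$.
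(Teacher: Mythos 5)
Your proof is correct and follows essentially the same route as the paper's (one-line) argument: the paper also reduces addability to $|R_{G+uv}(u)\setminus R_G(u)| \geq |C| \geq c$, and your two observations --- that $C \subseteq R_{G+uv}(u)$ via the new edge plus strong connectivity, and that $C \cap R_G(u) = \emptyset$ since reaching any vertex of $C$ would mean reaching all of $C$ --- are exactly the steps the paper leaves implicit. You simply spell out the bookkeeping more carefully.
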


\begin{proof}
Since $C$ is large and is not reachable from $u$, then $|R_{G+uv}(u) \setminus R_{G}(u)| = |R_{G}(v) \setminus R_{G}(u)| \geq |C| \geq c$. Thus, $(u,v)$ is addable.
\end{proof}

\begin{lem}
\label{lem:invlargecomp}
Omitting removable edges does not increase the number of large components in graph.
\end{lem}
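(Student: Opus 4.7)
The plan is to prove the claim for the removal of a single removable edge $uv$, from which the statement for omitting any collection of removable edges follows by iterating. I would split on whether $u$ and $v$ lie in the same SCC of $G$. If they do not, then since edge removal can only split SCCs (never merge them), the SCC partition is unchanged and so is the count of large components. The interesting case is when $u$ and $v$ share an SCC $C$ of $G$; let $C_1,\ldots,C_m$ be the SCCs into which $C$ decomposes in $G-uv$, with $u \in C_u := C_1$. The goal is to show that each of $C_2,\ldots,C_m$ is small, so that the single large component $C$ is replaced by at most one large component.

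The key structural claim I would establish is that $C_u$ is the unique sink of the SCC-DAG on $C$ in $G-uv$, and that $R_{G-uv}(u) \cap C = C_u$. For the sink claim, I would pick any $w \in C$ and take a simple $w \to u$ path in $G$ (which exists since $C$ is strongly connected); such a path cannot traverse the edge $uv$, as that would force revisiting $u$, and it cannot leave $C$ either, because any intermediate vertex $x$ would then be reachable from $w$ and reach $u$, while $u$ in turn reaches $w$, forcing $x \in C$ by maximality of the SCC. Hence the path survives in $G-uv$ inside $C$, so every $C_i$ reaches $C_u$, making $C_u$ the unique sink. A symmetric argument shows that once a path from $u$ leaves $C$ in $G-uv$ it cannot return, so $R_{G-uv}(u) \cap C$ is exactly what $u$ reaches via paths inside $C$, namely $C_u$.

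Combining these facts, $C \setminus C_u \subseteq R_G(u) \setminus R_{G-uv}(u)$, and removability of $uv$ gives $|R_G(u) \setminus R_{G-uv}(u)| < c$, so $|C \setminus C_u| < c$. Each $C_i$ with $i \geq 2$ is contained in $C \setminus C_u$, hence is small. So $C$ contributes at most one large component to $G-uv$, while the SCC structure outside $C$ is untouched, completing the argument. I expect the main obstacle to be pinning down the sink claim together with the accompanying confinement-of-paths-to-$C$ arguments; once these are in hand, the removability hypothesis cleanly bounds $|C \setminus C_u|$ and forces all remaining pieces to be small.
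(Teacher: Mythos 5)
Your proposal is correct and follows essentially the same route as the paper's proof: both reduce to a single removable edge $uv$, observe that removability bounds $|R_G(u)\setminus R_{G-uv}(u)| < c$, and argue that the part of $u$'s component $C$ that separates from $u$ is contained in this set, hence small. Your version is in fact more careful than the paper's, which implicitly assumes $C$ splits into exactly two pieces, whereas you correctly handle the decomposition into $C_1,\ldots,C_m$ and justify why every stray piece lies in $C\setminus C_u$.
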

\begin{proof}
Let $(u,v)$ be a removable edge in $G$ and let $C$ be the component that contains $u$. If $v \not\in C$, then removing $uv$ does not change the strongly connected components of $G$. Assume $v \in C$. Let $S=R_G(u) \setminus R_{G-uv}(u)$ be the set of vertices that are reachable from $u$ only through $uv$. Because $uv$ is removable, we know that $|S| < c$. If $C$ divided to two new components after the removal of $uv$, then $C \cap S$ and $C \setminus S$ will form two different components in $G-uv$. Moreover $|S \cap C| \leq |S| < c$. Thus we have not increased the number of large connected components in $G-uv$.
\end{proof}

\begin{lem}
\label{lem:rem5}
After adding an edge as in step $5$ and then completing step $1$, the number of large components at the beginning of step $5$ has reduced by at least 1. 
\end{lem}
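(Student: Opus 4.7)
The plan is to show that (a) the very act of adding the edge $\ell_i r_i$ in step 5 already merges at least two previously-distinct large components into one, and (b) the subsequent iteration of step 1 cannot create any new large components. The number of large components therefore strictly drops by at least one overall.

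For (a), recall that step 5 is executed only after step 1 has removed all removable edges, so by Lemma~\ref{lem:dag} the component-level graph on maximal strongly connected components is a DAG, and by Lemma~\ref{lem:noremleafsize} every leaf component in the full component DAG is either an isolated vertex or large. The large-component subgraph (used in steps 3--5) is a restriction to the vertices that are large components; inside it, $T_i$ is a root and $L_i$ is a leaf with $L_i \neq T_i$, and by construction there is a directed path $T_i = C_0 \to C_1 \to \cdots \to C_m = L_i$ through large components. Adding the edge $\ell_i r_i$ with $\ell_i \in L_i$ and $r_i \in T_i$ closes this path into a directed cycle, so in $G + \ell_i r_i$ every vertex on $C_0 \cup C_1 \cup \cdots \cup C_m$ lies in a single strongly connected component $C^\star$. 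Since $C^\star$ contains the large component $T_i$, it is itself large. Thus two previously distinct large components $T_i$ and $L_i$ have been consolidated into one, and the count of large components strictly decreases by at least $1$ immediately after step 5.

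For (b), the return to step 1 only removes removable edges. By Lemma~\ref{lem:invlargecomp}, each such removal does not increase the number of large components. Hence, applying step 1 iteratively to $G + \ell_i r_i$ until no removable edge remains yields a graph whose number of large components is at most the number present just after step 5. Chaining this with the strict drop established in part (a) gives the claim.

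The only subtle point is verifying that $L_i$ is genuinely a large component rather than an isolated vertex before step 5; but this is immediate from the definition of the large-component network in step 3, which only includes components of size at least $c$ as its vertices. I expect no serious obstacle: the main argument is essentially a one-line cycle-closure observation combined with Lemma~\ref{lem:invlargecomp}.
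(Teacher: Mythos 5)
Your proof is correct and follows essentially the same route as the paper's: adding $\ell_i r_i$ closes the directed path from $T_i$ to $L_i$ into a cycle, merging at least the two distinct large components $T_i$ and $L_i$ into one, and Lemma~\ref{lem:invlargecomp} guarantees the subsequent pass of step 1 cannot increase the count. No gaps.
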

\begin{proof}
Clearly, adding edges cannot increase the number of components. Moreover, since $L_i$ is a leaf of $T_i$, and $L_i$ and $T_i$ there is clearly a path from every $r_i \in T_i$ to every $x_i \in L_i$ in the original graph. Using the new edge $\ell_ir_i$, there is now also a path from every $x_i \in L_i$ to every $r_i \in T_i$. Hence this reduces the number of large components by at least one. By lemma \ref{lem:invlargecomp}, omitting removable edges does not increase the number of large components in graph. Thus, overall, the number of large components reduces by at least one. 
\end{proof}

\begin{lem}
\label{lem:rem6}
After adding an edge as in step $6$ and then completing step $1$, the number of large components at the beginning of step $6$ reduces by at least one. 
\end{lem}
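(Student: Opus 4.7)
The plan is to mirror the structure of the proof of Lemma~\ref{lem:rem5}: first show that the edge addition(s) in step~$6$ strictly decrease the number of large components by at least one, then invoke Lemma~\ref{lem:invlargecomp} to conclude that the subsequent step~$1$ removals cannot undo this progress. The crux is a short case analysis on whether step~$6$ adds only $r_2 r_1$ or also $r_1 r_2$, showing in each case that $T_1 \cup T_2$ collapses into a single strongly connected component.

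The easy case is when both $r_2 r_1$ and $r_1 r_2$ are added. Since $r_1 \in T_1$ and $r_2 \in T_2$ already lie in their own strongly connected components, the two new edges immediately give paths in both directions between $T_1$ and $T_2$, so $T_1 \cup T_2$ lies in a single SCC of $G + r_2 r_1 + r_1 r_2$. Because $|T_1| \geq c$, this SCC is a large component, and we have merged the two distinct large components $T_1$ and $T_2$ into one.

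The subtler case is when only $r_2 r_1$ is added, which by the branch condition means $T_2 \subseteq R_{G + r_2 r_1}(T_1)$. I would observe that the edge $r_2 r_1$ originates in $T_2$, so any path in $G + r_2 r_1$ starting from a vertex of $T_1$ and using this edge must already have reached $T_2$ before using it. Hence $R_{G + r_2 r_1}(T_1) = R_G(T_1)$ up to vertices of $T_1$ itself, and in particular $T_2 \subseteq R_G(T_1)$. Combined with the new direct $T_2 \to T_1$ edge $r_2 r_1$, this places $T_1$, $T_2$, and every vertex on a $T_1 \to T_2$ path in $G$ into one SCC of $G + r_2 r_1$, which again has size $\geq c$.

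In either case the number of large components strictly decreases by at least one after step~$6$. Applying Lemma~\ref{lem:invlargecomp} to the ensuing step~$1$, the deletions of removable edges cannot increase the number of large components, so the net decrease from the start of step~$6$ to the completion of step~$1$ is at least one. The only real obstacle is the reachability-accounting argument in the single-edge case; everything else is bookkeeping that parallels the proof of Lemma~\ref{lem:rem5}.
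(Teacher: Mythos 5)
Your proof is correct and takes essentially the same approach as the paper's: show that the edge(s) added in step~$6$ place $T_1$ and $T_2$ into a single large strongly connected component, then invoke Lemma~\ref{lem:invlargecomp} to conclude that step~$1$ cannot undo the decrease. Your case analysis is just a more detailed version of the paper's one-line justification of the merge; note that in the single-edge case the detour through $T_2 \subseteq R_G(T_1)$ is unnecessary, since the branch condition $T_2 \subseteq R_{G+r_2r_1}(T_1)$ together with the new edge $r_2r_1$ already gives mutual reachability in $G+r_2r_1$.
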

\begin{proof}
Clearly, adding edges cannot increase the number of components. Moreover, after adding $r_2r_1$ (and if necessary $r_1r_2$), the two large components $T_1, T_2$ are in the same large component in the new graph. Hence, the number of large components has reduced by at least one. By lemma \ref{lem:invlargecomp},, omitting removable edges does not increase the number of large components in graph. Thus, overall, the number of large components reduces by at least one. 
\end{proof}

\begin{lem}
\label{lem:rem7}
After adding an edge as in step $7$ and then completing step $1$, the number of large components at the beginning of step $7$ does not increase and the number of components without a direct edge to $T_1$ decreases by one.
\end{lem}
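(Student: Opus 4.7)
The plan is to prove the two parts separately. Part (a), that the number of large components does not increase, will follow quickly: adding the single edge $s_jr_1$ cannot split any SCC (edge additions can only merge SCCs), and Lemma~\ref{lem:invlargecomp} handles the subsequent step~$1$, which only deletes removable edges. The substance therefore lies in part (b).

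The key observation I would establish first is that at the start of step~$7$ the graph $G$ has just emerged from step~$1$ and so contains no removable edges. Applying Lemma~\ref{lem:notremreachlarge} to any edge $uv$ with $v \in S_j$ would force $|R_G(v)| \geq c$; but since $S_j$ is a leaf of the component graph we have $R_G(v) \subseteq S_j$, and $|S_j| < c$ by definition of ``small,'' a contradiction. So no edge of $G$ has its head in $S_j$, which rules out both incoming edges from other components and any internal edges of $S_j$. Consequently $S_j = \{s_j\}$ is a completely isolated singleton.

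Using this isolation, I would then argue that adding $s_jr_1$ leaves the reachability of every vertex $x \neq s_j$ unchanged, since no such $x$ can reach $s_j$ in $G$, while $R_{G+s_jr_1}(s_j) = \{s_j\} \cup R_G(T_1)$. Therefore the removability status of every pre-existing edge is unchanged, so no old edge becomes newly removable, and the only new edge $s_jr_1$ is itself not removable because deleting it would lose reachability to all of $T_1$, which has size at least $c$. Hence the step~$1$ that follows step~$7$ performs no deletions, and the final graph is exactly $G + s_jr_1$.

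To conclude, observe that $s_j$ still has no incoming edges in $G + s_jr_1$, so the added edge cannot merge any SCCs and the SCC decomposition is identical to that of $G$; thus $T_1$ remains the unique large component, confirming part (a) with equality. The only component whose relationship to $T_1$ has changed is $S_j$, which now possesses the direct edge $s_jr_1$ to $T_1$, giving part (b) with an exact decrease of one. The main obstacle I expect is the structural claim that $S_j$ must be an isolated singleton; once it falls out of Lemma~\ref{lem:notremreachlarge} as above, the rest of the argument is essentially bookkeeping.
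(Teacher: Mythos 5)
Your proof is correct and follows essentially the same route as the paper's: both hinge on the observation that $S_j$ is an isolated singleton (which the paper pulls from Lemma~\ref{lem:noremleafsize} rather than re-deriving it from Lemma~\ref{lem:notremreachlarge} as you do), so that adding $s_jr_1$ changes no vertex's reachability except $s_j$'s, no edge becomes removable, and the ensuing step~1 is a no-op. The only cosmetic difference is that the paper rules out $s_jr_1$ itself being removable by noting it was addable, whereas you argue directly via $|T_1| \geq c$.
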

\begin{proof}
Note that, at the beginning of step 7 there is a single large component (otherwise we would be in step 5 or 6).
Clearly, adding  the edge reduces the number of components without a direct edge to $T_1$, and adding edges cannot increase the number of components or create more roots. We now show that no edge is removable when we go to step 1, everything that holds after step 7 must also hold after step 7+1. 

Let $G$ be the network at the beginning of step 7. For sake of contradiction, assume that $xy$ is a removable edge; thus, 
$|R_G+s_jr_1(x) \setminus R_{G+s_jr_1-xy}(x)| < c$. Since $s_jr_1$ was addable in $G$, we know that $xy \neq s_jr_1$. Additionally, since $s_j$ was an isolated vertex in $G$ we know that it has no other edge and hence $x \neq s_j$. Moreover, since $s_j$ was an isolated vertex in $G$, we know it has no incoming edge, and hence $x \not\in R_{G+sjr_1}(z)$ for any $z \neq x$. Thus, $R_{G+s_jr_1-xy}(x) = R_{G-xy}(x)$ and $R_{G+s_jr_1}(x) = R_{G}(x)$, and this implies that $|R_{G}(x) \setminus R_{G-xy}(x)| < c$, and hence $xy$ was removable in $G$. This gives a contradiction, as no such $xy$ exists at the beginning of step 7 (as it would have been removed in the previous step 1).
\end{proof}

\begin{lem}
\label{lem:step8}
There  is exactly one large-component $T_1$, this set $T_1$ is reachable from every component, and there must be at least one small component $S_k$ with no incoming edge such that $|R_G(S_k) \setminus T_1| > c$.
\end{lem}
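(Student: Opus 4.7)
The plan is to establish the three claims in order, since the first two will drop out easily from the fact that we have progressed past steps~5, 6, 7 (together with Lemmas~\ref{lem:noremleafsize} and~\ref{lem:onebig}), and the real content is the existence of the root small component $S_k$ witnessing the reachability bound. The main obstacle will be showing $|R_G(S_k) \setminus T_1| > c$: we have to locate a small root component whose out-reach picks up at least $c{+}1$ vertices outside $T_1$, and the only handle we have on this is the presence of an addable edge.

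For the first claim, the only reason we invoked step~8 at all is that step~6 did not apply, which requires exactly one large component. For the second claim, I would observe that the component-level graph is a DAG (Lemma~\ref{lem:dag}), every leaf must be either an isolated (hence small) component or a large component (Lemma~\ref{lem:noremleafsize}), step~7 did not apply so no small component is a leaf, and there is exactly one large component $T_1$; hence $T_1$ is the unique leaf. In a DAG, every vertex has a directed path to some leaf, so every component reaches $T_1$, i.e., $T_1 \subseteq R_G(C)$ for every component $C$.

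For the third claim, I would start from the existence of an addable edge $uv$ (guaranteed by the fact that we are past step~2). By Lemma~\ref{lem:addinsamecomp}, $u$ and $v$ lie in different components; write $C_v$ for the component containing $v$. The key inequality is $|R_G(v) \setminus T_1| > c$, which I prove by two cases. If $u \in T_1$, then since $T_1$ is a sink in the component DAG we have $R_G(u) = T_1$, so addability gives $|R_{G+uv}(u) \setminus R_G(u)| = |R_G(v) \setminus T_1| > c$. If $u \notin T_1$, the second claim gives $T_1 \subseteq R_G(u)$; addability gives $|R_G(v) \setminus R_G(u)| > c$, and since $T_1 \subseteq R_G(u)$ we have $R_G(v) \setminus R_G(u) \subseteq R_G(v) \setminus T_1$, so again $|R_G(v) \setminus T_1| > c$.

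Finally, since the component-level graph is a DAG, walking backwards along incoming component-edges from $C_v$ must terminate at some root component $S_k$; such an $S_k$ is small because the unique large component $T_1$ is a sink and is thus not a root (or, if there is a lone isolated large vertex it is not in a nontrivial path, but by the second claim $T_1$ reaches none of the other components so cannot be such a root either). By construction $C_v \subseteq R_G(S_k)$, hence $R_G(v) \subseteq R_G(S_k)$, giving $|R_G(S_k) \setminus T_1| \geq |R_G(v) \setminus T_1| > c$, as required.
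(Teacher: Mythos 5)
Your proof is correct and follows essentially the same route as the paper's, which disposes of the first two claims by noting that steps 5--7 did not fire and of the third with the one-line remark that otherwise no edge would be addable; you simply fill in the details the paper omits, most usefully the backward walk in the component DAG that justifies taking $S_k$ to be a \emph{root} (the paper's proof never addresses the ``no incoming edge'' clause). The only blemishes are $\pm 1$ bookkeeping slips the paper itself commits (e.g., $R_{G+uv}(u)\setminus R_G(u)$ also contains $v$ itself, and $R_G(u)=T_1\setminus\{u\}$ rather than $T_1$ for $u\in T_1$), which do not affect the argument.
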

\begin{proof}
Exactly one large-component $T_1$, since at least one large component exists (Lemma~\ref{lem:onebig}) and if two or more exists we  would be in step 5 or 6.
Moreover, $T_1$ is reachable from every component, otherwise we would be in step 7.
Lastly, we know that there must be some small component $S_k$ such that $|R_G(S_k) \setminus T_1| > c$, otherwise no edge addable. In particular, any $rs_k$ edge with $r \in T_1$ and $s_k \in S_k$ is addable.
\end{proof}
\section{Price of Stability and Anarchy}
\vspace{0.1in}
\begin{prop}
\label{prop:empty}
If $c_s \geq 1$ or $c_\ell \geq 1$ then the empty network is pairwise stable. Moreover, if $c_s > n-1$ or $c_\ell > n-1$ then the only stable network (and hence the only pairwise stable graph) is the empty graph. 
\end{prop}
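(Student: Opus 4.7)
My plan is to handle the two halves of the proposition separately via direct utility computations on the empty network, exploiting the fact that a speaking path requires matched speaking edges in one direction and listening edges in the other, so the empty graph admits no non-trivial reachability and every relevant perturbation changes reachability by at most one vertex. The single non-routine ingredient throughout will be the decomposition $U = U^s + U^\ell$ together with the observation that $U^\ell(u)$ is independent of $u$'s own outgoing speaking edges (and symmetrically), which is what makes every single-vertex deviation evaluable component by component.

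For the first claim, the goal is to verify bi-pairwise stability of the empty network under $c_s \geq 1$ or $c_\ell \geq 1$; ordinary stability follows automatically via Lemma \ref{pairwise_stable_if_stable}. First, no single edge is addable: inserting only $s_{uv}$ creates no speaking path (the needed $\ell_{vu}$ is absent), so $R^s(u)$ does not grow and the utility change for $u$ is $-c_s \leq 0$, and the same holds symmetrically for a lone listening edge. For the remaining bi-pairwise case, the simultaneous addition of a complete pair $s_{uv}, \ell_{vu}$, direct inspection shows that $u$ acquires precisely one new speaking-reachable vertex (namely $v$) at cost $c_s$ and $v$ acquires exactly one new listening-reachable vertex (namely $u$) at cost $c_\ell$, so the utility changes are $1 - c_s$ and $1 - c_\ell$ respectively. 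Under the hypothesis at least one of these is non-positive, so the two endpoints cannot both strictly benefit, which is exactly what bi-pairwise stability demands. In the directed setting $c_\ell = 0$, stability and pairwise stability coincide and the single-edge calculation alone suffices.

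For the second claim, I argue the contrapositive: assume $G$ is stable with $c_s > n - 1$ and some vertex $u$ has $ds^+_G(u) \geq 1$. Using the decomposition above, the deviation $S'_u = (\emptyset, S^\ell_u)$ preserves $U^\ell_G(u)$ exactly and replaces $U^s_G(u) = |R^s_G(u)| - c_s \cdot ds^+_G(u) \leq n - c_s < 1$ by $U^s_{S'_u, \mathbf S_{-u}}(u) = 1$, strictly improving $u$'s total utility and contradicting stability. Hence $E_s = \emptyset$. But then no listening path exists either (such paths require speaking edges pointing backward along them), so $R^\ell(v) = \{v\}$ for every $v$; if $c_\ell > 0$ any remaining listening edge is pure cost and must be removed in a stable graph, forcing $E_\ell = \emptyset$ as well, while if $c_\ell = 0$ the residual listening edges are irrelevant under the directed-setting convention of Section~\ref{sec:prelim}. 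The case $c_\ell > n - 1$ is handled by the symmetric deviation in which the roles of speaking and listening are swapped.
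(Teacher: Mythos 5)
Your proof is correct and follows essentially the same route as the paper's: for the first claim, the only candidate deviation on the empty network is the joint addition of a complete pair $s_{uv},\ell_{vu}$, which changes the two endpoints' utilities by exactly $1-c_s$ and $1-c_\ell$; for the second, a vertex with an outgoing speaking edge has speaking utility at most $(n-1)-c_s<0$ and strictly improves by dropping all of its speaking edges, which leaves its listening utility untouched. The only discrepancies are cosmetic --- you count a vertex as reachable from itself (so your bounds read $n-c_s<1$ and the empty deviation yields speaking utility $1$, where the paper's convention gives $(n-1)-c_s<0$ and $0$; the two conventions cancel and the comparison is unaffected), and you additionally dispose of residual listening edges at the end of the second claim, a cleanup step the paper's own proof omits.
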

\begin{proof}[of Proposition~\ref{prop:empty}]
Assume without loss of generality that $c_s\geq1$. We will prove the contrapositive. Assume that $G$ is not pairwise stable. Since the network is empty, there are no possible edges to remove. Thus, there must exists a pair of vertices  $v,w$ such that $U^s_{G+s_{vw}+\ell_{wv}}(v) > U^s_G(v)$ and $U^\ell_{G+s_{vw}+\ell_{wv}}(w) \geq U^\ell_G(w)$. This implies that $U^s_{G+s_{vw}+\ell_{vw}}(v) =U^s_G(v)+1-c_s > U^s_G(v)$, and thus $c_s < 1$. 

Now, suppose that $c_s >  n-1$. Note that in a stable network, the utility for any vertex $v$ is at least $0$ since this can always by attained by taking the strategy $S_v = \emptyset$. Assume that a network $G$ is nonempty. Thus, there must exist at least one vertex $v$ such that $ds^+(v) > 0$. Let $v$ be such a vertex. Then $U^s_G(v) = |R^s_G(v)| - c_s\cdot ds^+(v) \leq (n-1) - c_s < 0$. Thus, $G$ is not stable.
\end{proof}

\begin{prop}
\label{efficient cycle}
If $k = \infty$, and $0 < c_s \leq n-1, 0 < c_\ell \leq n-1$, then the cycle is the only efficient network. 
\end{prop}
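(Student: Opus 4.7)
The plan is to reduce any efficient network to a purely directed-graph picture, express social welfare in a closed form involving only two aggregate quantities (the total reachability and the number of complete edges), and then compare every competitor to the cycle via a counting inequality together with the hypothesis on costs.

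First, by Lemma~\ref{prop:bi_stable}, every edge of an efficient network $G$ is complete, so $G$ is determined by a single directed graph $D = (V, E)$ in which $(v, w) \in E$ iff $s_{vw} \in E_s$ (equivalently $\ell_{wv} \in E_\ell$). Write $m = |E|$ and let $R$ count the ordered reachable pairs in $D$ (with $v \in R^s_{G,\infty}(v)$ by convention). Completeness, together with the fact that a speaking path $v \to \dots \to u$ is a listening path $u \to \dots \to v$, gives $\sum_v|R^s_{G,\infty}(v)| = \sum_v|R^\ell_{G,\infty}(v)| = R$ and $\sum_v ds^+_G(v) = \sum_v d\ell^+_G(v) = m$. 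Hence $\varphi(G) = 2R - (c_s + c_\ell)\,m$, and specializing to the directed cycle (where $R = n^2$ and $m = n$) gives $\varphi(\text{cycle}) = 2n^2 - (c_s + c_\ell)\,n$.

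The key estimate is $R \leq n + m(n-1)$ for every directed graph $D$: writing $R = n + \sum_v(|R^s_{G,\infty}(v)| - 1)$, each summand is at most $n - 1$ and equals zero whenever $v$ has out-degree $0$; since at most $m$ vertices can have positive out-degree (the out-degrees sum to $m$), the bound follows. Substituting and rearranging yields $\varphi(\text{cycle}) - \varphi(G) \geq (n - m)\,\bigl(2(n-1) - (c_s + c_\ell)\bigr)$, which is nonnegative for all $m \leq n$ under the hypothesis $c_s + c_\ell \leq 2(n-1)$. The regime $m > n$ is handled instead by the cruder bound $R \leq n^2$, giving $\varphi(G) \leq 2n^2 - (c_s+c_\ell)m < \varphi(\text{cycle})$ strictly since $c_s + c_\ell > 0$.

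For uniqueness I would then split on strong connectivity. If $D$ is strongly connected but not the directed cycle, then $m \geq n + 1$ while $R = n^2$, and strictness follows immediately. If $D$ is not strongly connected, then $R \leq n^2 - 1$, and combining with the bounds above gives strict inequality whenever $m \geq n$ or $c_s + c_\ell < 2(n-1)$. To squeeze out strictness for $m < n$, I would refine the counting: equality in $R = n + m(n-1)$ would force every positive-out-degree vertex to reach all other $n-1$ vertices while every out-degree lies in $\{0,1\}$; a brief case analysis on sinks then rules this out whenever $m < n$ and $n \geq 3$ (the predecessor of the unique sink along a would-be Hamiltonian path cannot reach the first vertex). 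The hardest piece is the extremal case $c_s = c_\ell = n-1$, where the linear combination $2(n-1) - (c_s + c_\ell)$ vanishes and, for instance, the empty network matches the cycle's welfare; beyond this boundary the plan above delivers uniqueness cleanly, so the technical challenge really reduces to a careful structural refinement of $R \leq n + m(n-1)$ — or a strengthening of the hypothesis to $c_s + c_\ell < 2(n-1)$ — in order to separate the cycle from other candidates at the very extreme.
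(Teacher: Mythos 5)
Your argument is correct and reaches the result by a genuinely different route from the paper. The paper works \emph{per vertex}: since $c_s,c_\ell>0$, any vertex with positive out-degree has $U^s_G(v)\le (n-1)-c_s$ and $U^\ell_G(v)\le (n-1)-c_\ell$, a vertex with out-degree zero gets $0$, so $\varphi(G)\le n\bigl((n-1)-(c_s+c_\ell)\bigr)$ with equality forcing every out-degree to equal $1$ and every reachability set to be all of $V\setminus\{v\}$; a functional digraph in which everyone reaches everyone is a single cycle, and completeness pins down the listening edges. You instead work in aggregate, writing $\varphi=2R-(c_s+c_\ell)m$ and bounding $R\le n+\min(m,n)(n-1)$, which buys you an explicit deficit $(n-m)\bigl(2(n-1)-(c_s+c_\ell)\bigr)$ and a clean split on $m$ versus $n$; the cost is the extra case analysis (strong connectivity, the sink argument for $1\le m<n$) that the paper's local bound avoids. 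The most valuable part of your proposal is the boundary observation: the paper's proof silently asserts ``$c_s<(n-1)$'' even though the hypothesis only gives $c_s\le n-1$, and at $c_s=c_\ell=n-1$ the per-vertex maximum is $0$, attained both by the cycle and by the empty network, so uniqueness genuinely fails there. Your deficit formula makes this degeneracy visible exactly where the factor $2(n-1)-(c_s+c_\ell)$ vanishes; the proposition should be restated with strict inequalities $c_s,c_\ell<n-1$ (or the conclusion weakened to ``a cycle is efficient''), and with that correction both proofs go through.
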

\begin{proof}
For ease of notation, we drop $k = \infty$ from the subscripts.
For an arbitrary node $v$, if $ds^+_G(v)=0$ then there does not exist a speaking path from $v$ to rest of the network, and hence  $U_G^s(v) = \vert R^s_G(v) \vert - c_s \cdot ds_G^+(v) = 0$. 
Otherwise, if $ds_G^+(v) \geq 1$, then the rest of network may be reachable from $v$. Hence,  $\vert R^s_G(v) \vert \leq n-1$ where equality occurs if and only if all other nodes are reachable from $v$.
Therefore 
\[U^s_G(v) = \vert R^s_G(v) \vert - c_s \cdot ds_G^+(v) \leq (n-1)- c_s ,\] 
and we know that $c_s<(n-1)$. Hence, in the case of equality, the utility is improved over the setting where $ds_G^+(v) = 0$.
Similarly, it follows that \[U^\ell_G(v)  \leq (n-1)- c_\ell.\] Thus we can conclude that \[U_G(v) = U^s_G(v) + U^\ell_G(v) \leq (n-1)-(c_s+c_\ell) \] so
\[\varphi(G) = \sum\nolimits_v U_G(v)\leq n\cdot((n-1)-(c_s+c_\ell)) \]
By the definition of efficiency, a network is efficient if it maximizes $\varphi$ thus if we be able to find a network say $G$ such that $\varphi(G)=n\cdot((n-1)-(c_s+c_\ell))$ then we could conclude that $G$ is efficient.

Now we consider cycle $C$ to compute $\varphi(C)$. A cycle $C$ is a closed path $v_1,v_2, \ldots v_n,v_{n+1}$ such that $ v_{n+1}=v_1,$  $ \forall_{ 1\leq i\leq n} \ s_{v_iv_{i+1}} \in E_s , \ \ell_{v_{i+1}v_i} \in E_\ell$. 
Thus by the definition of reachability for any $v, w \in C, v \in R^s_{G,{\infty}}(w), v \in R^\ell_{G,\infty}(w)$ so $\vert R^s_G(v) \vert=\vert R^\ell_{G,\infty}(v) \vert=n-1$.
Moreover for any node $v \in C$ we have $ds_C^+(v)=d\ell_C^+(v)=1$. 
Therefore 
\begin{align*}
\varphi(G) &= \sum\nolimits_v U_G(v) \\
&= \sum\nolimits_v (n-1)-(c_s+c_\ell)) \\
&= n\cdot((n-1)-(c_s+c_\ell)) .
\end{align*}
Now we prove that cycle is the unique efficient network. By contradiction assume that there exists another efficient network $G \neq C$ where $\varphi(G)=n\cdot((n-1)-(c_s+c_\ell)) $. According to the first part of proof, the only condition for holding this equality is that for all ${v\in V}$ it holds that $ds_G^+(v)=d\ell_G^+(v)=1, \   \vert R^s_G(v) \vert=\vert R^\ell_G(v) \vert=n-1$, thus $G$ is connected in other words, for all ${v,w\in V}$, we have that $v \in R_{G}^s(w)$ and $v \in R_{G}^\ell(w)=1$.  Moreover since the speaking outdegree of each node is one, so if we only consider speaking edges, we will have one cycle on the nodes. Now we want to show that listening edges are exactly in opposite direction of speaking edges. If not, there must exist pair $(v,w)$ such that $s_{vw}\in E_s, \ell_{wv}\notin E_\ell$ thus by the same proof used in proposition \ref{prop:bi_stable}, node $v$ could improve her utility by removing $s_{vw}$ and this action dos not change other nodes' utility. This improvement in social welfare contradicts with efficiency of $G$.      
\end{proof}

\begin{prop}
\label{stable cycle} For  $k = \infty$,  if $0 \leq c_s \leq n-1$ and $0 \leq c_\ell \leq n-1$, then the cycle $C$ is pairwise stable.
\end{prop}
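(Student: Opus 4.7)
The plan is to verify the two halves of the definition of bi-pairwise stability directly, exploiting two features of the cycle $C$ already noted in the proof of Proposition~\ref{efficient cycle}: every vertex has exactly one outgoing speaking edge and one outgoing listening edge, and because $k=\infty$ the reachability sets are already maximal, i.e.\ $|R^s_G(v)|=|R^\ell_G(v)|=n-1$ for every $v\in V$. Both observations are immediate from the cyclic structure and make the required calculations essentially one-line each.

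First I would show that no single speaking or listening edge is removable. Fix $v\in V$ and let $v^+$ denote its successor in the cycle, so that $s_{v,v^+}$ is $v$'s unique outgoing speaking edge. Deleting it severs the only speaking path leaving $v$, so $|R^s_{G-s_{v,v^+}}(v)|=0$, while the speaking cost drops by $c_s$. The change in utility is therefore
\[U_{G-s_{v,v^+}}(v)-U_G(v)\;=\;c_s-(n-1)\;\leq\;0\]
by the hypothesis $c_s\leq n-1$, so the edge is not (strictly) removable. Swapping the roles of speaking and listening gives the same conclusion for the unique outgoing listening edge of $v$.

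Next I would verify the ``no addable pair'' clause. Pick any $u,v\in V$ with $s_{uv}\notin E_s$. Since $|R^s_G(u)|=n-1$ already, adding the pair $s_{uv},\ell_{vu}$ cannot enlarge $u$'s speaking reach, and $u$'s listening cost is unaffected because the only contribution $u$ makes to the pair is the speaking edge $s_{uv}$. Hence
\[U_{G+s_{uv}+\ell_{vu}}(u)\;=\;U_G(u)-c_s\;\leq\;U_G(u),\]
so the hypothesis of the pairwise-stability implication---that adding the pair strictly benefits $u$---is never satisfied and the implication holds vacuously. The symmetric case where $\ell_{vu}\notin E_\ell$ is identical with the roles of $u$ and $v$ interchanged.

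I do not expect any real technical obstacle: the cycle has already done all the work by being both minimally-edged (forcing the reachability loss $(n-1)$ when any edge is removed) and fully reaching (leaving no room for reachability gain when any edge is added). The only care required is to match each clause of the definition of bi-pairwise stability to the appropriate observation and to note that only the non-strict inequalities demanded by that definition are needed, so the endpoint cases $c_s\in\{0,n-1\}$ (and analogously for $c_\ell$) cause no issue.
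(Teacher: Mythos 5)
Your proof is correct, and the two calculations doing the work are the same ones the paper relies on: removing a vertex's unique outgoing edge of either type trades a reachability loss of $n-1$ against a cost saving of at most $n-1$, and adding any edge gains nothing because $|R^s_G(v)|=|R^\ell_G(v)|=n-1$ is already maximal. The logical organization differs, however. The paper first argues Nash stability (each of the four single-edge deviations weakly decreases utility) and then invokes Lemma~\ref{pairwise_stable_if_stable_and_scc} to transfer stability to pairwise stability for strongly connected networks. You instead verify the two clauses of the bi-pairwise-stability definition directly: no single speaking or listening edge is removable, and the pair-addition implication holds vacuously because the initiator's utility changes by $-c_s\leq 0$ (her listening cost being unaffected by $\ell_{vu}$). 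Your route is more self-contained, needing no auxiliary lemma, and it is more careful at the boundary: you note that only the non-strict inequalities of the definition are needed, so $c_s,c_\ell\in\{0,\,n-1\}$ cause no trouble, whereas the paper's proof asserts strict decreases using $c_s>0$ and $c_s<n-1$, which do not hold at the endpoints permitted by the statement. What the paper's route buys in exchange is the stronger conclusion that the cycle is Nash stable against arbitrary strategy deviations, which your argument does not establish (and the proposition does not require).
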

\begin{proof}
We first prove stability, and then use Lemma~\ref{pairwise_stable_if_stable_and_scc} to show that pairwise stability must also hold. Using the same proof presented in Proposition \ref{efficient cycle}, we know that for all $ {v\in V}$, we have $\vert R^s_G(v) \vert=\vert R^\ell_G(v) \vert=n-1$ and $ds_C^+(v)=d\ell_C^+(v)=1$. To prove the stability of $C$ we must show that for all agents  $v$ and any alternate strategy $S'_v \subseteq V$ , we have 
\[ U_{S_v,\mathbf S_{-v}}(v) \geq U_{S'_v,\mathbf S_{-v}}(v) .\]
There are only four possible moves for node $v_i$.
\begin{enumerate}[topsep=0pt,itemsep=0ex,partopsep=0ex,parsep=0pt]
\item
Adding $s_{v_i w}$ such that $w\neq v_{i+1}$
\item
Removing $s_{v_iv_{i+1}}$
\item
Adding $\ell_{v_i w}$ such that $w\neq v_{i+1}$
\item
Removing $\ell_{v_iv_{i+1}}$.

\end{enumerate}
Now we show that after any such change, the utility of $v_i$ will decrease. Without loss of generality, consider a change to a speaking edge. We know that $\vert R^s_C(v) \vert=\vert R^\ell_C(v) \vert=n-1$ which is the maximum possible, so by adding a new speaking edge we could not gain;  however we will loose $c_s$ thus $U_{C+s_{v_iw}}(v_i)=U_C(v_i)-c_s$ and since $c_s>0$ so $U_{C+s_{v_iw}}(v_i)<U_C(v_i)$. In the case of deleting, if we remove a speaking edge, then $ds_C^+(v_i)=0$ thus for all ${w\in V ,w \neq v_i}$ we have $w \notin R^s_C(v_i)$ so $\vert R^s_C(v_i) \vert=0$. Therefore, $U_{C-s_{v_iw}}(v_i)=U_C(v_i)+c_s-(n-1)$, and since $c_s<(n-1)$ so $U_{C-s_{v_iw}}(v_i)<U_C(v_i)$. The same reasoning used for listening edges. Hence the cycle is stable. By Lemma~\ref{pairwise_stable_if_stable_and_scc}, it is also pairwise stable.
\end{proof}

We conclude this section with a characterization of all efficient and stable networks for $k=1$. The proof follows directly from the definitions.
\begin{prop}
\label{prop:complete}
Assume $k = 1$. If $c_s, c_\ell > 1$ then the empty network is efficient and is the only stable graph. If  $c_s, c_\ell \leq 1$ then the complete network is efficient; if $c_s < 1$ or $c_s < 1$ then the complete network is the only only stable graph. If $c_s = c_\ell = 1$, then all graphs are efficient and stable.
\end{prop}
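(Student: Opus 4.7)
The plan is to reduce the proof to a simple decomposition argument based on the observation that for $k=1$ reachability is purely local: $u \in R^s_{G,1}(v)$ iff both $s_{vu} \in E_s$ and $\ell_{uv} \in E_\ell$ (and symmetrically for listening). Consequently each ordered pair $(v,w) \in V \times V$ with $v \neq w$ contributes independently to social welfare through the ``direction'' consisting of the potential edges $s_{vw}$ and $\ell_{wv}$: if both are present the contribution is $2 - c_s - c_\ell$; if exactly one is present, $-c_s$ or $-c_\ell$; if neither, $0$. I would similarly reduce stability to per-direction analysis, noting that $s_{vw}$ is removable iff $c_s > \mathbf{1}[\ell_{wv} \in E_\ell]$ and addable iff $c_s < 1$ and $\ell_{wv} \in E_\ell$ (and symmetrically for the listening edges). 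The proposition then follows by inspecting which value each direction takes in each regime.

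With this in hand, each case is handled by direct inspection. In the regime $c_s, c_\ell > 1$, every possible nonzero direction contribution is strictly negative, so the empty network uniquely attains the maximum welfare $0$; moreover every existing edge has cost strictly exceeding its maximal gain of $1$ and is therefore removable, so the empty network is the unique stable graph. In the regime $c_s, c_\ell \leq 1$, the complete network matches every direction for a welfare of $n(n-1)(2 - c_s - c_\ell) \geq 0$, which is optimal since each of the $n(n-1)$ directions contributes at most $2 - c_s - c_\ell$; when additionally $c_s < 1$ or $c_\ell < 1$, no edge of the complete network is removable or addable, and any noncomplete network either contains an unmatched edge (which is removable) or is missing a complete pair $(s_{uv}, \ell_{vu})$ that is mutually beneficial to add, violating pairwise stability. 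When $c_s = c_\ell = 1$, matched directions contribute $0$ and unmatched edges contribute $-1$, so the welfare-maximizing (and simultaneously stable) networks are exactly those with no unmatched edges.

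The main subtlety will be the uniqueness claim in the second case: a network that is complete except for a single missing pair $(s_{uv}, \ell_{vu})$ would remain a Nash equilibrium under purely unilateral deviations, so uniqueness must be derived from the pairwise (bi-pairwise) stability notion, under which adding the pair strictly benefits one endpoint and at least weakly benefits the other. This is consistent with the convention used throughout Section~\ref{sec:stab}. Everything else in the argument is an immediate computation from the per-direction decomposition.
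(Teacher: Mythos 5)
Your per-direction decomposition is the right way to make this precise, and it is essentially the only content a proof could have: the paper supplies no argument at all here, stating only that the proposition ``follows directly from the definitions.'' For $k=1$ each agent's utility is separable over ordered pairs and over the speaking/listening components, which is also what justifies reducing Nash stability to single-edge deviations; your contribution table ($2-c_s-c_\ell$ for a matched direction, $-c_s$ or $-c_\ell$ for an unmatched edge, $0$ otherwise) and your addability/removability criteria all check out. What is worth stressing is that your analysis does not merely prove the proposition --- it corrects it in two places. First, when $c_s=c_\ell=1$, a graph containing an unmatched edge (say $s_{vw}\in E_s$ with $\ell_{wv}\notin E_\ell$) loses $1$ in welfare on that direction and that edge is removable, so ``all graphs are efficient and stable'' is false as written; the correct class, as you conclude, is the set of graphs in which every edge is complete. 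Second, under the paper's Nash notion of stability the empty network is stable for \emph{all} positive $c_s,c_\ell$ when $k=1$ (no single speaking edge is addable unless its listening partner is already present, and vice versa; compare Proposition~\ref{prop:empty}), so the claimed uniqueness of the complete network when $c_s<1$ or $c_\ell<1$ can only be meant with respect to bi-pairwise stability, exactly as you flag. One residual caveat even there: in the boundary subcase $c_s=1$, $c_\ell<1$ the bi-pairwise condition is triggered only when the initiating endpoint strictly gains, so a graph missing a single direction $u\to v$ gives $u$ a gain of exactly $1-c_s=0$ and is not excluded; uniqueness in that subcase requires reading the hypothesis as $c_s<1$ \emph{and} $c_\ell<1$ (the statement's ``$c_s < 1$ or $c_s < 1$'' is evidently garbled). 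None of this is a gap in your argument; where your conclusions and the proposition disagree, the proposition is at fault.
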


\section{Other Proofs}
\vspace{.1in}
\begin{proof}[of Lemma \ref{pairwise_stable_if_stable}]
For sake of contradiction, assume that $G$ is pairwise stable but is not stable.
Since $G$ is pairwise stable so for any speaking edge $s_{vw} \in E_s$ we have $U_G(v) \geq U_{G-s_{vw}}(v)$ and for any listening edge $\ell_{vw} \in E_\ell$ we have $U_G(v) \geq U_{G-\ell_{vw}}(v)$. Hence, $G$ has no removable edge. And because we assume that $G$ is not stable so there must exist a speaking edge $s_{vw} \not\in E_s$ where $U_G(v) < U_{G+s_{vw}}$ or a listening edge $\ell_{vw} \not\in E_\ell$ where $U_G(v) < U_{G+\ell_{vw}}(v)$ which contradicts the definition of pairwise stability.
\end{proof}

\begin{proof}[of Lemma \ref{prop:bi_stable}]
We will prove the contrapositive. Assume that there exist vertices $v, w \in V$ such that $s_{vw} \in E_s $ and $\ell_{wv} \notin E_\ell$. 
Since $\ell_{wv} \notin E_\ell$, if for any node $z$ we have $z \in R^s_{G,k}(v)$, 
then, by the definition of reachability, there exists a path from $v$ to $z$ that does not use edge $s_{vw}$. 
Thus there is no node $z$ such that $z \in  R^s_{G,k}(v)$ and $z \not\in  R^s_{G-s_{vw},k}(v)$. 
In other words $\vert  R^s_{G-s_{vw},k}(v) \vert = \vert  R^s_{G,k}(v) \vert$, and the positive component of the utility is the same for $v$. However, $ds^+_{G-s_{vw}}(v) = ds^+_{G}(v)-1$, so the negative component of the utility is reduced by $c_s$ in $G-s_{vw}$. Thus, $U_{G-s_{vw}}(v)= U_G(v)+c_s$ and, since $c_s>0$, we conclude that $U_{G-s_{vw}}(v) > U_G(v)$. Hence, $G$ is could not be stable or pairwise stable. Moreover, deleting $s_{vw}$ does not affect on the utility of other vertices so $\varphi(G-s_{vw}) > \varphi(G)$. Hence $G$ is not efficient.
The proof follows analogously for $\ell_{vw}$.
\end{proof}

\end{document}